\documentclass[a4paper]{article}

\usepackage[utf8]{inputenc}



\pagestyle{empty}

\setlength{\hoffset}{-0.mm}
\setlength{\voffset}{-0.4mm}
\setlength{\topmargin}{-13mm}
\setlength{\oddsidemargin}{0cm}
\setlength{\evensidemargin}{-1cm}
\setlength{\textheight}{248mm}
\setlength{\textwidth}{16cm}
\setlength{\columnsep}{1cm}
\setlength{\footskip}{1mm}

\pagestyle{myheadings}

\usepackage[dvipdfmx]{graphicx}
\usepackage{bmpsize}
\usepackage{multicol}
\usepackage{amsthm}
\usepackage{bm}
\usepackage{amssymb}
\usepackage{amsmath}
\usepackage{color}
\usepackage{ulem}
\usepackage{amsthm}
\usepackage{subcaption} 
\usepackage{graphicx}
\usepackage{dcolumn}
\usepackage{bm}

\usepackage[utf8]{inputenc}
\usepackage[T1]{fontenc}
\usepackage{etoolbox}
\usepackage{ascmac,wrapfig,makeidx}
\usepackage{physics}
\usepackage[stable]{footmisc}
\usepackage{mathrsfs}
\usepackage{color}
\usepackage{comment}
\usepackage{amsmath}


\theoremstyle{plain}

\newtheorem*{theorem*}{Theorem}
\newtheorem{lemma}{Lemma}[section]

\theoremstyle{definition}


\newcommand{\BA}{\begin{eqnarray}}
\newcommand{\EA}{\end{eqnarray}}

\definecolor{dgreen}{rgb}{0.0, 0.5, 0.0}

\begin{document}

\fontsize{14pt}{16.5pt}\selectfont

\begin{center}
\bf{General Construction of Bra-Ket Formalism for Identical
Particle Systems in Rigged Hilbert Space Approach
}
\end{center}
\fontsize{12pt}{11pt}\selectfont
\begin{center}
S. Ohmori$^{1,*}$ and J. Takahashi$^2$\\ 
\end{center}

\bigskip

\noindent
\it{1)~Department of Economics, Hosei University, Machida-shi, Tokyo 194-0298,
Japan.}
\noindent
\it{2)~Department of Economics, Asia University, Musashino-shi, Tokyo 180-0022,
Japan.}

\bigskip

\noindent
*corresponding author: 42261timemachine@ruri.waseda.jp\\
~~\\
\rm
\fontsize{11pt}{14pt}\selectfont\noindent

\baselineskip 20pt

{\bf Abstract}\\
%
This study discussed Dirac's bra-ket formalism for the identical particles system based on the rigged Hilbert space reformulated by R.~Madrid [J. Phys A:Math. Gen. 37, 8129 (2004)]. 
The bra and ket vectors for a composite system that form the basis of an identical particle system are described in dual and anti-dual spaces for the tensor product of rigged Hilbert spaces.
The permutation operator that characterizes the symmetry of identical particles is constructed as the operator on such dual spaces.
We also show that the nuclear spectral theorem in the tensor product of rigged Hilbert spaces endows the spectral expansion of the self-adjoint operator in the dual and anti-dual spaces and the expansion is consistent with the identicle particle system when the permutation operator commutes the self-adjoint operator.
%
%

%


\bigskip

\section{Introduction}
\label{sec:1}

A mathematical approach utilizing rigged Hilbert space (RHS) has been developed to handle Dirac's bra-ket notations precisely~\cite{Robert1966a,Robert1966b,Antoine1969a,Antoine1969b,Melsheimer1974a,Melsheimer1974b,Bohm1978,Bohm1981,Prigogine1996,Bohm1998,Antoiou1998,Antoiou2003,Gadella2003,Madrid2004,Madrid2005,Antoine2009,Antoine2021}.
%
%
%
RHS comprises the following triplet of topological vectors spaces~\cite{Gelfand1964,Maurin1968}, 
\begin{equation}
    \Phi \subset \mathcal{H} \subset \Phi^\prime,
    \label{eqn:00}
\end{equation}
where $\mathcal{H}=(\mathcal{H}, \langle \cdot, \cdot \rangle_\mathcal{H})$ is a complex Hilbert space and $\Phi=(\Phi, \tau_\Phi)$ is a nuclear space that is a dense linear subspace of $\mathcal{H}$.
The inner product $\langle \cdot, \cdot \rangle_\Phi$ on $\Phi$ becomes separately continuous on $(\Phi, \tau_\Phi)$, where $\langle \phi, \psi \rangle_\Phi \equiv \langle \phi, \psi \rangle_\mathcal{H}$ for $\phi, \psi \in \Phi$.
$\Phi^\prime$ is a family of continuous linear functionals on $(\Phi,\tau_\Phi)$.
In the case of the RHS approach, the nuclear spectral theorem for a self-adjoint operator (observable) in $\mathcal{H}$ guarantees the existence of generalized eigenvectors that characterize the eigenequations for the bra and ket vectors, individually.
This theorem also provides the spectral expansions based on which the spectral decomposition for discrete and continuous spectrum, specified by
Dirac's $\delta$-function (distributions) found in the literature, can be constructed.
Hence, Dirac's bra-ket formalism is subsumed within RHS framework, which is considered as the foundational framework of quantum mechanics.
Indeed, several studies using the RHS approach have developed precise and elegant formulations to address various problems in quantum theory, such as the harmonic oscillator~\cite{Bohm1978}, resonance states (Gamow vectors)~\cite{Bohm1981}, and scattering problems~\cite{Madrid2004}.

Recently, this approach has begun to be applied to modern quantum physics, such as resonance states in open quantum systems and non-Hermitian operators exhibiting characteristic symmetries~\cite{Chruscinski2003,Chruscinski2004,Lars2019,Fernandez2022,Ohmori2022,Ohmori2024}. 
Note that the physical phenomena observed in these systems cannot be adequately described using only the Hilbert space, such as the $L^2$-space. 
For instance, in the problem of a quantum damped system, if the $L^2$-space is treated as the fundamental space, the given Hamiltonian shows only real spectrum, whereas, it contains complex eigenvalues when RHS is selected.
Then, the complex eigenvalues can be interpreted as the resonant state~\cite{Chruscinski2003,Chruscinski2004}. 
Thus, the RHS is indispensable for addressing complex eigenvalues beyond the $L^2$-space theory.
As evident from this example, we believe that the development of an RHS theory is crucial for the mathematical foundations and the elucidation of the quantum phenomena.

To construct the bra and ket vectors using RHS, 
a more elegant and simple approach, proposed by Madrid~\cite{Madrid2004}, has been developed.
This approach adapts the RHS (\ref{eqn:00}) including the dual space $\Phi^\times$ of $\Phi$, 
\begin{equation}
    \Phi \subset \mathcal{H} \subset \Phi^\prime,\Phi^\times,
    \label{eqn:O1-1}
\end{equation}
where $\Phi^\times$ is a family of continuous {\it anti-linear} functionals on $(\Phi,\tau_\Phi)$.
(A function $f\in \Phi^{\times}$ is anti-linear if it satisfies $f(a\varphi+b\phi)=a^*f(\varphi)+b^*f(\phi)$ where $a$ and $b$ are complex numbers with complex conjugates $a^*$ and $b^*$ and $\varphi, \phi \in \Phi$.)
Using (\ref{eqn:O1-1}), the bra and ket vectors are established as elements of $\Phi^\prime$ and $\Phi^\times$, in the following procedure. 
Let $\varphi \in \Phi$, and we define a map
$\ket{\varphi}_\mathcal{H} : \Phi \rightarrow \mathbb{C}^1$ using
$\ket{\varphi}_\mathcal{H}(\phi) \equiv  \langle \phi,\, \varphi\rangle_\mathcal{H}$ for $\phi \in \Phi$; this map is called a ket of $\varphi$.
The bra vector of $\varphi$ is defined as the complex conjugate of $\ket{\varphi}_\mathcal{H}$, 
namely, the map $\bra{\varphi}_\mathcal{H} : \Phi \rightarrow \mathbb{C}^1 $ where $\bra{\varphi}_{\mathcal{H}}(\phi)= \ket{\varphi}_{\mathcal{H}}^*(\phi)=(\ket{\varphi}_{\mathcal{H}}(\phi))^*=\langle \varphi,\, \phi\rangle_\mathcal{H}$.
%
%
Clearly, $\bra{\varphi}_{\mathcal{H}}$ and $\ket{\varphi}_{\mathcal{H}}$ belong to $\Phi^{\prime}$ and $\Phi^{\times}$ of $\Phi$, respectively.
The combination of dual and anti-dual spaces, $\Phi^\prime$ and $\Phi^\times$, is hereafter referred to as the dual spaces.
%
%
%
In the description, the spectral expansions of the self-adjoint operator can be performed as the elements of the dual spaces, in which all calculations in terms of the bra and ket vectors are conducted.
%
This Madrid approach exactly supplies the rigorous formalism of bra-ket notation.
However, it remains insufficient for composite systems containing identical particle systems compared to single-particle systems.
%
%
%
This study aimed to construct the bra-ket space in the dual spaces for identical particles based on Madrid's RHS formalism.

The remainder of this paper is organized as follows.
In Section~\ref{sec:2}, 
we construct the bra-ket vectors for the tensor product of the RHS (\ref{eqn:O1-1}) on the dual spaces and show the relation to the single bra-ket vectors obtained from an RHS.
In addition, the permutation operator is introduced on the dual spaces, which endows the symmetric properties of the bra-ket vectors derived from the identical RHS.
Using the nuclear spectral theorem for the tensor product of the RHS,
we present the formulation of the spectral expansions of the bra-ket vectors by the generalized eigenvectors for a self-adjoint operator in {Secion~\ref{sec:3}}. 
The generalized eigenvectors form a complete orthonormal system in the dual spaces.
Furthermore, the permutation operator obtained in Section \ref{sec:4} aided in the generalization of the eigenvectors, thus preserving the symmetric structure.
%
%
%
Finally, Section~\ref{sec:5} presents the conclusions.

\section{Construction of the bra-ket vectors in the dual spaces}
\label{sec:2}

\subsection{General formulation}
\label{sec:2.1}

When establishing the state space that describes a composite system without interactions using Hilbert space theory, the tensor product of Hilbert spaces is introduced~\cite{Simon1980,Hall2013}.
Similarly, in the RHS context, 
the tensor product of RHS is required to construct the bra and ket vectors related to a composite system.
For simplicity, we focus on a two-particle system.
Let $\Phi_i \subset \mathcal{H}_i \subset \Phi _i^\prime, \Phi _i^\times $ ($i=1,2$) be a RHS (\ref{eqn:O1-1}), where each $(\mathcal{H}_i,\langle \cdot, \cdot \rangle_ {\mathcal{H}_i})$ is a complex Hilbert space, $\Phi_i=(\Phi_i,\tau_{\Phi_i})$ is a subspace of $\mathcal{H}_i$ with the nuclear topology $\tau_{\Phi_i}$, and $\Phi _i^\prime$ and $\Phi _i^\times$ are the dual and anti-dual spaces of $(\Phi_i,\tau_{\Phi_i})$, respectively.
From each RHS, the bra and ket vectors are expressed as the maps $\bra{\varphi}_{\mathcal{H}_i}$ and $\ket{\varphi}_{\mathcal{H}_i}$ 
in $\Phi _i^\times$ and $\Phi _i^\prime$, respectively ($i=1,2$).
Now we introduce the algebraic tensor product for the Hilbert spaces $\mathcal{H}_1$ and $\mathcal{H}_2$ as an inner product space $\mathcal{H}_1 \otimes \mathcal{H}_2=(\mathcal{H}_1 \otimes \mathcal{H}_2,\langle \cdot,\, \cdot\rangle_{\mathcal{H}_1 \otimes \mathcal{H}_2})$ where $\mathcal{H}_1 \otimes \mathcal{H}_2=\Big{\{}\displaystyle\sum_{j=1}^m\varphi_{1j}\otimes\varphi_{2j}\mid \varphi_{1j}\in \mathcal{H}_1,\varphi_{2j}\in \mathcal{H}_2,j=1\sim m,m\in \mathbb{N} \Big{\}}$. 
Its inner product satisfies $\langle \varphi_1 \otimes \varphi_2,\, \phi_1\otimes\phi_2\rangle_{\mathcal{H}_1 \otimes \mathcal{H}_2}=\langle \varphi_1 \, \phi_1 \rangle_{\mathcal{H}_1}\langle \varphi_2 \, \phi_2 \rangle_{\mathcal{H}_2}$.
The completion of the algebraic tensor product with respect to the topology induced by $\langle \cdot,\, \cdot\rangle_{\mathcal{H}_1 \otimes \mathcal{H}_2}$ is denoted by $\mathcal{H}_1 \bar {\otimes} \mathcal{H}_2=(\mathcal{H}_1 \bar{\otimes} \mathcal{H}_2,\langle \cdot,\, \cdot\rangle_{\mathcal{H}_1 \bar{\otimes} \mathcal{H}_2})$.
The algebraic tensor product of the nuclear spaces $(\Phi_1,\tau_{\Phi_1})$ and $(\Phi_2,\tau_{\Phi_2})$ is also expressed as a locally convex space $\Phi_1 \otimes {\Phi}_2=\Big{\{}\displaystyle\sum_{j=1}^m\varphi_{1j}\otimes\varphi_{2j}\mid \varphi_{1j}\in {\Phi}_1,\varphi_{2j}\in {\Phi}_2,j=1\sim m,m\in \mathbb{N} \Big{\}}$ equipping the locally convex topology $\tau_p$ with the local base $\mathcal{B}_p=\{\Gamma(V_1\otimes V_2) \mid V_i\in \mathcal{B}_i,i=1,2\}$ where each $\mathcal{B}_i$ is a local base of $\tau_{\Phi_i}$ and $\Gamma(X)$ stands for the convex circled hull of a set $X$~\cite{Schaefer1966}.
As is well-known, the completion of $(\Phi_1 \otimes {\Phi}_2,\tau_p)$ is the nuclear space, 
denoted by $(\Phi_1 \hat{\otimes} {\Phi}_2,\widehat{\tau_p})$.
Therefore, considering the dual and anti-dual spaces of $\Phi_1 \hat{\otimes} {\Phi}_2$,
it is verified that the following triplet comprises an RHS~\cite{Maurin1968},
\begin{eqnarray}
    \Phi_1 \hat{\otimes} {\Phi}_2 \subset \mathcal{H}_1 \bar {\otimes} \mathcal{H}_2 \subset 
    (\Phi_1 \hat{\otimes} {\Phi}_2)^{\prime},~(\Phi_1 \hat{\otimes} {\Phi}_2)^{\times}.
    \label{eqn:O2-4}
\end{eqnarray}
Using the RHS (\ref{eqn:O2-4}) the bra and ket vectors corresponding to $\varphi\in \Phi_1 \hat{\otimes} {\Phi}_2$ are defined by
\begin{eqnarray}
    \bra{\varphi}_{\mathcal{H}_1 \bar{\otimes} \mathcal{H}_2} : \Phi_1 \hat{\otimes} {\Phi}_2 \to \mathbb{C},~\bra{\varphi}_{\mathcal{H}_1 \bar{\otimes} \mathcal{H}_2}(\phi)=\langle \varphi,\, \phi\rangle_{\mathcal{H}_1 \bar {\otimes} \mathcal{H}_2}, \\
    \label{eqn:o2-5a}
    \ket{\varphi}_{\mathcal{H}_1 \bar{\otimes} \mathcal{H}_2} : \Phi_1 \hat{\otimes} {\Phi}_2 \to \mathbb{C},~\ket{\varphi}_{\mathcal{H}_1 \bar{\otimes} \mathcal{H}_2}(\phi)=\langle \phi,\, \varphi\rangle_{\mathcal{H}_1 \bar {\otimes} \mathcal{H}_2}.    
    \label{eqn:o2-5b}
\end{eqnarray}
Consequently, the relations $\ket{\varphi}_{\mathcal{H}_1 \bar{\otimes} \mathcal{H}_2}=\bra{\varphi}_{\mathcal{H}_1 \bar{\otimes} \mathcal{H}_2}^*$, $\bra{\varphi}_{\mathcal{H}_1 \bar{\otimes} \mathcal{H}_2} \in (\Phi_1 \hat{\otimes} {\Phi}_2)^{\prime}$, and $\ket{\varphi}_{\mathcal{H}_1 \bar{\otimes} \mathcal{H}_2} \in (\Phi_1 \hat{\otimes} {\Phi}_2)^{\times}$, are satisfied.
To observe a connection between the ket $\ket{\varphi}_{\mathcal{H}_1 \bar{\otimes} \mathcal{H}_2}$ and the kets $\ket{\varphi}_{\mathcal{H}_i}~(i=1,2)$ for the single RHS, 
we consider $\varphi=\varphi_1 \otimes \varphi_2 \in \Phi_1 \otimes \Phi_2 \subset \Phi_1 \hat{\otimes} \Phi_2$.
The ket $\ket{\varphi}_{\mathcal{H}_1 \bar{\otimes} \mathcal{H}_2}$ becomes $\ket{\varphi}_{\mathcal{H}_1 \bar{\otimes} \mathcal{H}_2}=\ket{\varphi_1 \otimes \varphi_2}_{\mathcal{H}_1 \bar{\otimes} \mathcal{H}_2}$ in $(\Phi_1 \hat{\otimes} {\Phi}_2)^{\times}$.
By introducing a map $\ket{\varphi_1}_{\mathcal{H}_1}\ket{\varphi_2}_{\mathcal{H}_2} : \Phi_1 \times \Phi_2\to \mathbb{C}$ where $\ket{\varphi_1}_{\mathcal{H}_1}\ket{\varphi_2}_{\mathcal{H}_2}(\phi_1,\phi_2)=\langle \phi_1, \varphi _1 \rangle_ {\mathcal{H}_1}\langle \phi _2, \varphi _2 \rangle_ {\mathcal{H}_2}$ for $(\phi_1,\phi_2)\in \Phi_1 \times \Phi _2$, we obtain
\begin{eqnarray}
    \ket{\varphi_1 \otimes \varphi_2}_{\mathcal{H}_1 \bar{\otimes} \mathcal{H}_2}(\phi)
    =
    \langle \phi_1, \varphi _1 \rangle_ {\mathcal{H}_1}\langle \phi _2, \varphi _2 \rangle_ {\mathcal{H}_2}
    =\ket{\varphi}_{\mathcal{H}_1}\ket{\varphi}_{\mathcal{H}_2}(\phi_1,\phi_2),    
\label{relation1}
\end{eqnarray}
for $\phi=\phi_1 \otimes \phi_2 \in \Phi_1 \otimes \Phi_2$.
As $\ket{\varphi_1}_{\mathcal{H}_1}\ket{\varphi_2}_{\mathcal{H}_2}$ is anti-linear continuous on $\Phi_1 \times \Phi _2$, there exists the unique element $v$ of $(\Phi_1 \otimes \Phi _2)^{\times}$
satisfying $\ket{\varphi_1}_{\mathcal{H}_1}\ket{\varphi_2}_{\mathcal{H}_2}=v\circ \chi$,
namely, 
$v\circ \chi (\phi_1, \phi _2)=v(\phi_1 \otimes \phi _2)=\ket{\varphi_1}_{\mathcal{H}_1}\ket{\varphi_2}_{\mathcal{H}_2}(\phi_1,\phi_2)$ for any $(\phi_1, \phi_2)\in \Phi_1 \times \Phi _2$,
where $\chi : (\phi_1,\phi_2)\mapsto \phi_1\otimes\phi_2$ is the canonical bilinear map on $\Phi_1 \times \Phi_2$ into $\Phi_1 \otimes \Phi_2$~\cite{Schaefer1966}.
Note that the mapping $H : v \mapsto v\circ \chi$ becomes an isomorphism between $(\Phi_1 \otimes \Phi _2)^{\times}$ and $\mathcal{B}^{\times}(\Phi_1,\Phi_2)$ where
$\mathcal{B}^{\times}(\Phi_1,\Phi_2)$ is the family of continuous antilinear functionals on $(\Phi_1 \times \Phi _2, \tau_{\Phi_1 \times \Phi _2})$.
From (\ref{relation1}), the uniqueness of $v$ shows $v=\ket{\varphi_1\otimes \varphi_2}_{\mathcal{H}_1 \bar{\otimes} \mathcal{H}_2}|_{(\Phi_1 \otimes \Phi_2)}$.
($f|_A$ denotes the restriction of the map $f$ on $A$.)
In setting $\ket{\varphi_1\otimes \varphi_2}_{\mathcal{H}_1 \bar{\otimes} \mathcal{H}_2}|_{(\Phi_1 \otimes \Phi_2)}$ $\equiv$ $\ket{\varphi_1\otimes \varphi_2}_{\mathcal{H}_1 {\otimes} \mathcal{H}_2}$,
the isomorphism $H$ identifies 
\begin{eqnarray}
    \ket{\varphi_1 \otimes \varphi_2}_{\mathcal{H}_1 {\otimes} \mathcal{H}_2}
    =\ket{\varphi_1}_{\mathcal{H}_1}\ket{\varphi_2}_{\mathcal{H}_2}.   
\label{relation2}
\end{eqnarray}
Here, we set an isomorphic mapping $\hat{L} : \Phi _1^{\times}\otimes \Phi _2^{\times} \to \hat{L}(\Phi _1^{\times}\otimes \Phi _2^{\times}) \subset \mathcal{B}^{\times}(\Phi_1,\Phi_2)$ where $\hat{L}(f \otimes g)(\varphi,\phi)=f(\varphi)g(\phi)$ for $f\otimes g\in \Phi _1^{\times}\otimes \Phi _2^{\times}$ and $ (\varphi,\phi)\in \Phi_1 \times \Phi _2$.
By replacing $f$ and $g$ with the kets $\ket{\varphi_1}_{\mathcal{H}_1}$ and $\ket{\varphi_2}_{\mathcal{H}_2}$  in $\Phi_1^{\times}$ and $\Phi_2^{\times}$, respectively, we have $\hat{L}(\ket{\varphi_1}_{\mathcal{H}_1} \otimes \ket{\varphi_2}_{\mathcal{H}_2})=\ket{\varphi_1}_{\mathcal{H}_1}\ket{\varphi_2}_{\mathcal{H}_2}$.
By considering the isomorphism $\hat 
{L}$ as an identification, we obtain  
\begin{eqnarray}
    \ket{\varphi_1}_{\mathcal{H}_1} \otimes \ket{\varphi_2}_{\mathcal{H}_2}=\ket{\varphi_1}_{\mathcal{H}_1}\ket{\varphi_2}_{\mathcal{H}_2}. 
\label{relation3}
\end{eqnarray}
Thus, using (\ref{relation2}) and (\ref{relation3}), we obtain
\begin{eqnarray}
    \ket{\varphi_1 \otimes \varphi_2}_{\mathcal{H}_1 \bar{\otimes} \mathcal{H}_2}=\ket{\varphi_1}_{\mathcal{H}_1} \otimes \ket{\varphi_2}_{\mathcal{H}_2}
\label{eqn:ket_relation}
\end{eqnarray}
for any $\varphi_1 \in \Phi_1$ and $\varphi_2 \in \Phi_2$. 
(\ref{eqn:ket_relation})
indicates that the ket for $\varphi=\varphi_1\otimes \varphi_2$ defined by (\ref{eqn:o2-5b}) under the tensor product of RHS can be represented by the tensor product of the kets each of which is defined in a single RHS.
This connection (\ref{eqn:ket_relation}) is consistent with the assumption found in the physical literature; the ket $\ket{\varphi_1 \otimes \varphi_2}_{\mathcal{H}_1 {\otimes} \mathcal{H}_2}$ describing the state of a composite system is composed of the tensor product of the ket vectors $\ket{\varphi_i}_{\mathcal{H}_i}$ ($i=1,2$), each ket describing the state of a single particle.  
%
%
%

The obtained relations can be applied to a $N$-particle system ($N<\infty$).
The RHS comprises the $N$-multiple tensor product of RHS,  
represented by,
\begin{eqnarray}
    \widehat{\otimes}_{j=1}^N {\Phi_j} 
    \subset 
    \overline{\otimes}_{j=1}^N \mathcal{H}_j
    \subset 
    (\widehat{\otimes}_{j=1}^N \Phi_j)^{\prime},
    ~(\widehat{\otimes}_{j=1}^N \Phi_j)^{\times},
    \label{eqn:O2-2-1}
\end{eqnarray}
where
$\widehat{\otimes}_{j=1}^N {\Phi}_j=(\widehat{\otimes}_{j=1}^N {\Phi}_j, \widehat{\tau_{p}})$ is the tensor product obtained by completion of the algebraic tensor product $(\otimes_{j=1}^N {\Phi}_j, {\tau_{p}})$ of the nuclear spaces $(\Phi_j,\tau_{\Phi_j})~(j=1,\cdots, N)$.
$\overline{\otimes}_{j=1}^N \mathcal{H}_j=(\overline{\otimes}_{j=1}^N \mathcal{H}_j, \langle \cdot,\, \cdot\rangle_{\overline{\otimes}_{j=1}^N \mathcal{H}_j})$ is the tensor product space of Hilbert spaces whose inner product represents $\langle \cdot,\, \cdot\rangle_{\overline{\otimes}_{j=1}^N \mathcal{H}_j}$. 
The spaces $
(\widehat{\otimes}_{j=1}^N \Phi_j)^{\prime}$
and
$(\widehat{\otimes}_{j=1}^N \Phi_j)^{\times}$ 
are the dual and anti-dual spaces of $\widehat{\otimes}_{j=1}^N {\Phi}_j$, respectively.
Note that $\widehat{\otimes}_{j=1}^N {\Phi}_j$ becomes a nuclear space.
Using (\ref{eqn:O2-2-1}), the bra and ket vectors are defined as
\begin{eqnarray}
    \bra{\varphi}_{\overline{\otimes}_{j=1}^N \mathcal{H}_j} : \widehat{\otimes}_{j=1}^N {\Phi}_j \to \mathbb{C},~
    \bra{\varphi}_{\overline{\otimes}_{j=1}^N \mathcal{H}_j}(\phi)=
    \langle \varphi,\, \phi\rangle_{\overline{\otimes}_{j=1}^N \mathcal{H}_j}, \\
    \label{eqn:o2-2-2a}
    \ket{\varphi}_{\overline{\otimes}_{j=1}^N \mathcal{H}_j} : \widehat{\otimes}_{j=1}^N {\Phi}_j \to \mathbb{C},~
    \ket{\varphi}_{\overline{\otimes}_{j=1}^N \mathcal{H}_j}(\phi)=\langle \phi,\, \varphi\rangle_{\overline{\otimes}_{j=1}^N \mathcal{H}_j},    
    \label{eqn:o2-2-2b}
\end{eqnarray}
for $\varphi\in \widehat{\otimes}_{j=1}^N {\Phi_j}$. 
In this case, the relation (\ref{eqn:ket_relation}) becomes
\begin{eqnarray}
    \ket{\varphi_1 \otimes\dots \otimes \varphi_N}_{\overline{\otimes}_{j=1}^N \mathcal{H}_j}
    =
    \ket{\varphi_1}_{\mathcal{H}_1} \otimes \dots \otimes \ket{\varphi_N}_{\mathcal{H}_N}
\label{eqn:ket_relation_n}
\end{eqnarray}
in $
(\widehat{\otimes}_{j=1}^N \Phi_j)^{\times}$, where $\varphi_j\in \Phi_j,~j=1,\cdots, N$. 
Similarly to (\ref{eqn:ket_relation_n}), 
the following relation of the bra vectors in $
(\widehat{\otimes}_{j=1}^N \Phi_j)^{\prime}$ is derived:
\begin{eqnarray}
    \bra{\varphi_1 \otimes\dots \otimes \varphi_N}_{\overline{\otimes}_{j=1}^N \mathcal{H}_j}
    =
    \bra{\varphi_1}_{\mathcal{H}_1} \otimes \dots \otimes \bra{\varphi_N}_{\mathcal{H}_N}.
\label{eqn:bra_relation_n}
\end{eqnarray}
%


\subsection{Permutation operator on the dual spaces}
\label{sec:2.2}

The symmetry of identical particles in the Hilbert space theory can be introduced by using the permutation operator~\cite{Simon1980,Hall2013}.
Now we focus on the case where $\mathcal{H}_1=\mathcal{H}_2=\dots=\mathcal{H}_N\equiv \mathcal{H}$ 
and
$\Phi_1=\Phi_2=\dots=\Phi_N \equiv \Phi$.
Let $\mathfrak{S}_N$  be the symmetry group of degree $N$.
We fix $\sigma \in \mathfrak{S}_N$ and define the permutation, $U_{\sigma} : \otimes^{N} \mathcal{H} \to \overline{\otimes}^N \mathcal{H}$, on the algebraic tensor product $\otimes^{N} \mathcal{H}$
where 
\begin{eqnarray}
    U_\sigma (\phi)=\sum_{j=1}^m \phi_{\sigma(1)j}\otimes\dots \otimes \phi_{\sigma(N)j}
    ~~
    \mbox{for}
    ~~
    \phi=\sum_{j=1}^m \phi_{1j}\otimes\dots \otimes \phi_{Nj}\in \otimes^{N} \mathcal{H}.
    \label{eqn:O3-0}
\end{eqnarray}
%
%
The permutation has the unique extension to the completion $(\overline{\otimes}^N \mathcal{H}, \langle \cdot,\, \cdot\rangle_{\overline{\otimes}^N \mathcal{H}})$ of the inner product space $(\otimes^N \mathcal{H}, \langle \cdot,\, \cdot\rangle_{{\otimes}^N \mathcal{H}})$.
We denote this extension by $U_{\sigma}$.
Corresponding to this case, the following triplet of the $N$-tensor product space of RHS is adapted, similar to that of (\ref{eqn:O2-2-1}),
\begin{eqnarray}
    \widehat{\otimes}^N {\Phi} 
    \subset \overline{\otimes}^N \mathcal{H}
    \subset 
    (\widehat{\otimes}^N \Phi)^{\prime},
    ~(\widehat{\otimes}^N \Phi)^{\times}.
    \label{eqn:O3-1}
\end{eqnarray}
%
%
%
The permutation suitable for the RHS (\ref{eqn:O3-1}) can be established as follows. 
Let the permutation $U_{\sigma}$ on $\otimes^{N}\mathcal{H}$ be restricted to the algebraic tensor product $\otimes^N \Phi$.
%
%
Consequently, the restriction $ U_\sigma|_{ \otimes^N \Phi}$ becomes an isomorphism of $\otimes^N \Phi$ onto itself, with respect to the nuclear topology $\tau_p$~\cite{Maurin1968}.
Therefore, to the nuclear space $(\widehat{\otimes}^N {\Phi}, \widehat{\tau_{p}})$, there exists the unique extension $U_{\sigma}^{\widehat{\otimes}^N {\Phi}}$ of $ U_\sigma|_{\otimes^N \Phi}$.
The uniqueness of $U_{\sigma}^{\widehat{\otimes}^N {\Phi}}$ shows $U_{\sigma}^{\widehat{\otimes}^N {\Phi}}=U_\sigma|_ {\widehat{\otimes}^N {\Phi}}$, and hence we obtain the permutation on the nuclear space $\widehat{\otimes}^N {\Phi}$ in the form of
\begin{eqnarray}
    U_{\sigma}^{\widehat{\otimes}^N {\Phi}} : (\widehat{\otimes}^N {\Phi}, \widehat{\tau_{p}}) \to (\widehat{\otimes}^N {\Phi}, \widehat{\tau_{p}}),~\phi \mapsto U_\sigma(\phi).    
    \label{eqn:o3-2}
\end{eqnarray}
Note that $U_{\sigma}^{\widehat{\otimes}^N {\Phi}}$ is an isomorphism  
and the relation 
\begin{eqnarray}
    i \circ U_{\sigma}^{\widehat{\otimes}^N {\Phi}} = U_\sigma \circ i    
    \label{eqn:o3-3}
\end{eqnarray}
is satisfied where $i$ is the canonical embedding that characterizes the RHS (\ref{eqn:O3-1}).

The symmetric structure for the tensor product of the Hilbert space, $\overline{\otimes}^N \mathcal{H}$, is characterized by the following projection, referred to as the permutation operator~\cite{Simon1980},
\begin{eqnarray}
    P_c=\frac{1}{N!}\displaystyle\sum_{\sigma\in \mathfrak{S}_N} c(\sigma)U_\sigma.      
    \label{eqn:projection1}
\end{eqnarray}
Similarly, the projection for the nuclear space $\widehat{\otimes}^N {\Phi}$ is introduced by using 
$U_{\sigma}^{\widehat{\otimes}^N {\Phi}}$ as
\begin{eqnarray}
       P_c^{\widehat{\otimes}^N {\Phi}}
    =
    \frac{1}{N!}\displaystyle\sum_{\sigma\in \mathfrak{S}_N} c(\sigma)U_\sigma^{\widehat{\otimes}^N {\Phi}}.
    \label{eqn:projection2}
\end{eqnarray}
%

%

The bra and ket vectors constructed using the tensor product of RHS belong to the dual spaces, as shown in the previous subsection.
This fact necessitates the extension of the permutation operator (\ref{eqn:projection2}) 
to the dual spaces.
As the operator (\ref{eqn:projection2}) is continuous on $\widehat{\otimes}^N \Phi$ and maps onto $\widehat{\otimes}^N \Phi$,
the extension of (\ref{eqn:projection2}) can be easily constructed as follows.
We set a operator $\widetilde{P_c^{\widehat{\otimes}^N {\Phi}}}$ on $
    (\widehat{\otimes}^N \Phi)^{\prime}\cup
    (\widehat{\otimes}^N \Phi)^{\times}$, 
    where  
\begin{eqnarray}
       \widetilde{P_c^{\widehat{\otimes}^N {\Phi}}}(f)(\phi)
    =f(P_c^{\widehat{\otimes}^N {\Phi}}(\phi)),
    \label{eqn:o3-2-1}
\end{eqnarray}
for $f\in  (\widehat{\otimes}^N \Phi)^{\prime}\cup
    (\widehat{\otimes}^N \Phi)^{\times}$, $\phi \in \widehat{\otimes}^N \Phi$.
This operator (\ref{eqn:o3-2-1}) endows the symmetric structure for the bra and ket vectors satisfying (\ref{eqn:ket_relation_n}) and (\ref{eqn:bra_relation_n}).
To show this fact, we fixed $N=2$ in short.
In the nuclear space $(\widehat{\otimes}^2 \Phi,\widehat{\tau_p})$, 
each $\phi \in \widehat{\otimes}^2 \Phi$ can be represented as the form of the sum of an absolutely convergent series, $\phi =\sum_{i=1}^{\infty}\lambda_i \phi_i^1\otimes \phi_i^2$,
where $\sum_{i}|\lambda_i|\leq 1$ and $\{\phi_i^1\}$ and $\{\phi_i^2\}$ are null sequences in $\Phi$~\cite{Schaefer1966}.
As $P_c^{\widehat{\otimes}^2 {\Phi}}$ is continuous linear on $(\widehat{\otimes}^2\Phi,\widehat{\tau_p})$,
we have
\begin{eqnarray}
    P_c^{\widehat{\otimes}^2 {\Phi}}(\phi) & = &
    \sum_{i=1}^{\infty}\lambda_i P_c^{\widehat{\otimes}^2 {\Phi}}(\phi_i^1\otimes \phi_i^2)\nonumber\\
     & = &
     \left\{
    \begin{array}{ll}
    \displaystyle\sum_{i=1}^{\infty} \frac{\lambda_i}{2}(\phi_i^1\otimes \phi_i^2
    +\phi_i^2\otimes \phi_i^1) 
    & (c=c_1) \\
    \displaystyle\sum_{i=1}^{\infty} \frac{\lambda_i}{2}(\phi_i^1\otimes \phi_i^2
    -\phi_i^2\otimes \phi_i^1) 
    & (c=sgn)
    \end{array}
    \right.
    \label{eqn:o3-2-3}
\end{eqnarray}
Here, we focused on the symmetry case, $c=c_1$.
(in the same manner, the anti-symmetric case is also obtained.)
For $\varphi=\varphi_1 \otimes \varphi_2 \in \otimes^2 \Phi \subset \widehat{\otimes}^2 \Phi$,
from (\ref{eqn:ket_relation}),
$\ket{\varphi}_{\overline{\otimes}^2 \mathcal{H}}
    =
    \ket{\varphi_1 \otimes \varphi_2}_{\overline{\otimes}^2 \mathcal{H}}
    =
    \ket{\varphi_1}_{\mathcal{H}} \otimes \ket{\varphi_2}_{\mathcal{H}}.
$
By using (\ref{eqn:o3-2-3}) and by considering the continuity and anti-linearity of a ket acting on $(\widehat{\otimes}^2 \Phi,\widehat{\tau_p})$, through the calculation,
\begin{eqnarray}
    \widetilde{P_c^{\widehat{\otimes}^2 {\Phi}}}(\ket{\varphi_1 \otimes \varphi_2}_{\overline{\otimes}^2 \mathcal{H}})(\phi)
    & = &
    \ket{\varphi_1 \otimes \varphi_2}_{\overline{\otimes}^2 \mathcal{H}}(P_{c_1}^{\widehat{\otimes}^2 {\Phi}}(\phi))
    \nonumber \\
    & = &
        \ket{\varphi_1 \otimes \varphi_2}_{\overline{\otimes}^2 \mathcal{H}}
    \Bigr\{ \sum_{i=1}^{\infty} \frac{\lambda_i}{2}(\phi_i^1\otimes \phi_i^2
    +\phi_i^2\otimes \phi_i^1)\Bigr\}
     \nonumber\\
    & = &
    \sum_{i=1}^{\infty} \frac{\lambda^*_i}{2}
    \Bigr\{
    \ket{\varphi_1 \otimes \varphi_2}_{\overline{\otimes}^2 \mathcal{H}}(\phi_i^1\otimes \phi_i^2)
    +\ket{\varphi_1 \otimes \varphi_2}_{\overline{\otimes}^2 \mathcal{H}}(\phi_i^2\otimes \phi_i^1)
    \Bigr\}
    \nonumber\\
     & = &
    \sum_{i=1}^{\infty} \frac{\lambda^*_i}{2}
    \Bigr\{
    \langle \phi_i^1\otimes \phi_i^2
    ,\, 
    \varphi_1 \otimes \varphi_2\rangle_{\overline{\otimes}^2 \mathcal{H}}
    +
    \langle \phi_i^2\otimes \phi_i^1
    ,\, 
    \varphi_1 \otimes \varphi_2\rangle_{\overline{\otimes}^2 \mathcal{H}}
    \Bigr\}
    \nonumber\\
    & = &
    \sum_{i=1}^{\infty} \frac{\lambda^*_i}{2}
    \Bigr\{
    \langle \phi_i^1
    ,\, 
    \varphi_1\rangle_{\mathcal{H}}
    \langle \phi_i^2
    ,\, 
    \varphi_2\rangle_{\mathcal{H}}
    +
    \langle \phi_i^2
    ,\, 
    \varphi_1\rangle_{\mathcal{H}}
    \langle \phi_i^1
    ,\, 
    \varphi_2\rangle_{\mathcal{H}}
    \Bigr\}
    \nonumber\\
    & = &
    \sum_{i=1}^{\infty} \frac{\lambda^*_i}{2}
    \Bigr\{
    \langle \phi_i^1\otimes \phi_i^2
    ,\, 
    \varphi_1 \otimes \varphi_2\rangle_{\overline{\otimes}^2 \mathcal{H}}
    +
    \langle \phi_i^1\otimes \phi_i^2
    ,\, 
    \varphi_2 \otimes \varphi_1\rangle_{\overline{\otimes}^2 \mathcal{H}}
    \Bigr\}
    \nonumber\\
    & = &
    \frac{1}{2}
    (
    \ket{\varphi_1}_\mathcal{H}
    \otimes
    \ket{\varphi_2}_\mathcal{H}
    +
    \ket{\varphi_2}_\mathcal{H}
    \otimes
    \ket{\varphi_1}_\mathcal{H}
    )
    (\sum_{i=1}^{\infty}\lambda_i \phi_i^1\otimes \phi_i^2)
    \nonumber\\
    & = &
    \frac{1}{2}
    (
    \ket{\varphi_1}_\mathcal{H}
    \otimes
    \ket{\varphi_2}_\mathcal{H}
    +
    \ket{\varphi_2}_\mathcal{H}
    \otimes
    \ket{\varphi_1}_\mathcal{H}
    )(\phi), 
    \label{eqn:o3-calculation1}
\end{eqnarray}
we have the relation $\widetilde{P_c^{\widehat{\otimes}^2 {\Phi}}}(\ket{\varphi_1 \otimes \varphi_2}_{\overline{\otimes}^2 \mathcal{H}})(\phi)=\frac{1}{2}
    (
    \ket{\varphi_1}_\mathcal{H}
    \otimes
    \ket{\varphi_2}_\mathcal{H}
    +
    \ket{\varphi_2}_\mathcal{H}
    \otimes
    \ket{\varphi_1}_\mathcal{H}
    )(\phi)$ for any $\phi \in \widehat{\otimes}^2 \Phi$. 
    Therefore, the symmetric relation for the ket vectors in $(\widehat{\otimes}^2 \Phi)^{\times}$ is obtained :
\begin{eqnarray}
    \widetilde{P_c^{\widehat{\otimes}^2 {\Phi}}}(\ket{\varphi_1}_\mathcal{H}
    \otimes
    \ket{\varphi_2}_\mathcal{H})
     = 
    \frac{1}{2}(\ket{\varphi_1}_\mathcal{H}
    \otimes
    \ket{\varphi_2}_\mathcal{H}
    +
    \ket{\varphi_2}_\mathcal{H}
    \otimes
    \ket{\varphi_1}_\mathcal{H}).
    \label{eqn:o3-2-4}
\end{eqnarray}
This relation can be generalized to the $N$-tensor product case; 
for $\ket{\varphi_1,\otimes\dots \otimes \varphi_N}_{\overline{\otimes}^N\mathcal{H}}=\ket{\varphi_1}_{\mathcal{H}}\otimes \dots\otimes \ket{\varphi_N}_{\mathcal{H}}$ in $(\widehat{\otimes}^N \Phi)^{\times}$
where $\varphi_1\otimes\dots \otimes \varphi_N\in \widehat{\otimes}^N \Phi$, we have
\begin{eqnarray}
    \widetilde{P_c^{\widehat{\otimes}^2 {\Phi}}}
    (\ket{\varphi_1}_\mathcal{H}
    \otimes\dots\otimes
    \ket{\varphi_N}_\mathcal{H})
    =
     \left\{
    \begin{array}{ll}
    \displaystyle\frac{1}{N!}\sum_{\sigma \in \mathfrak{S}_n}\ket{\varphi_{\sigma(1)}}_{\mathcal{H}}
    \otimes \dots \otimes
    \ket{\varphi_{\sigma(N)}}_{\mathcal{H}}
    & (c=c_1) \\
    \displaystyle\frac{1}{N!}\sum_{\sigma \in \mathfrak{S}_n}sgn(\sigma)\ket{\varphi_{\sigma(1)}}_{\mathcal{H}}
    \otimes \dots \otimes
    \ket{\varphi_{\sigma(N)}}_{\mathcal{H}}
    & (c=sgn).
    \end{array}
    \right.
    \label{eqn:o3-ket-symmetry}
\end{eqnarray}
Here, (\ref{eqn:o3-ket-symmetry}) presents the symmetry and anti-symmetry for only the ket vectors of in the space $(\widehat{\otimes}^2 \Phi)^{\times}$.
Related to (\ref{eqn:o3-ket-symmetry}), 
we set the spaces
\begin{eqnarray}
        (\widehat{\otimes}^N \Phi)_s^{\times}=\widetilde{P_{c_1}^{\widehat{\otimes}^2 {\Phi}}}
    ((\widehat{\otimes}^N \Phi)^{\times}),
    \label{eqn:ket-symm-setA}
    \\
        (\widehat{\otimes}^N \Phi)_a^{\times}=\widetilde{P_{sgn}^{\widehat{\otimes}^2 {\Phi}}}
    ((\widehat{\otimes}^N \Phi)^{\times}),
    \label{eqn:ket-symm-setB}
\end{eqnarray}
and refer to them as the symmetric and anti-symmetric ket spaces, respectively. 

In terms of $(\widehat{\otimes}^2 \Phi)^{\prime}$,
the symmetric structure for the bra vector is expressed using the permutation operator (\ref{eqn:o3-2-1}), as follows,
\begin{eqnarray}
    \widetilde{P_c^{\widehat{\otimes}^2 {\Phi}}}
    (\bra{\varphi_1}_\mathcal{H}
    \otimes\dots\otimes
    \bra{\varphi_N}_\mathcal{H})
    =
     \left\{
    \begin{array}{ll}
    \displaystyle\frac{1}{N!}\sum_{\sigma \in \mathfrak{S}_n}\bra{\varphi_{\sigma(1)}}_{\mathcal{H}}
    \otimes \dots \otimes
    \bra{\varphi_{\sigma(N)}}_{\mathcal{H}}
    & (c=c_1) \\
    \displaystyle\frac{1}{N!}\sum_{\sigma \in \mathfrak{S}_n}sgn(\sigma)\bra{\varphi_{\sigma(1)}}_{\mathcal{H}}
    \otimes \dots \otimes
    \bra{\varphi_{\sigma(N)}}_{\mathcal{H}}
    & (c=sgn).
    \end{array}
    \right.
    \label{eqn:o3-bra-symmetry}
\end{eqnarray}
Further, the symmetric and the anti-symmetric bra spaces are expressed as the following sets, respectively : 
\begin{eqnarray}
        (\widehat{\otimes}^N \Phi)_s^{\prime}=\widetilde{P_{c_1}^{\widehat{\otimes}^2 {\Phi}}}
    ((\widehat{\otimes}^N \Phi)^{\prime}),
    \label{eqn:bra-symm-setA}
    \\
        (\widehat{\otimes}^N \Phi)_a^{\prime}=\widetilde{P_{sgn}^{\widehat{\otimes}^2 {\Phi}}}
    ((\widehat{\otimes}^N \Phi)^{\prime}).
    \label{eqn:bra-symm-setB}
\end{eqnarray}
Thus, in the RHS formulation that characterizes the identical particle system, 
the symmetric structure can be individually assigned to the bra and ket vectors.

When we combine the dual spaces $(\widehat{\otimes}^N \Phi)^{\prime}$ and $(\widehat{\otimes}^N \Phi)^{\times}$ as $(\widehat{\otimes}^N \Phi)^{\prime}\cup(\widehat{\otimes}^N \Phi)^{\times}$,
the symmetric and anti-symmetric spaces of $(\widehat{\otimes}^N \Phi)^{\prime}\cup(\widehat{\otimes}^N \Phi)^{\times}$ become  
\begin{eqnarray}
    \big{[}(\widehat{\otimes}^N \Phi)^{\prime}\cup(\widehat{\otimes}^N \Phi)^{\times}\big{]}_s
   & = &
    \widetilde{P_{c_1}^{\widehat{\otimes}^2 {\Phi}}}((\widehat{\otimes}^N \Phi)^{\prime}\cup(\widehat{\otimes}^N \Phi)^{\times})
    =
    \widetilde{P_{c_1}^{\widehat{\otimes}^2 {\Phi}}}
    ((\widehat{\otimes}^N \Phi)^{\prime})\cup 
    \widetilde{P_{c_1}^{\widehat{\otimes}^2 {\Phi}}}
    ((\widehat{\times}^N \Phi)^{\times})
    \nonumber\\
    & = & (\widehat{\otimes}^N \Phi)_s^{\prime}\cup (\widehat{\otimes}^N \Phi)_s^{\times}.
    \label{eqn:braket-symm-setA}
\end{eqnarray}
and 
\begin{eqnarray}
    \big{[}(\widehat{\otimes}^N \Phi)^{\prime}\cup(\widehat{\otimes}^N \Phi)^{\times}\big{]}_a
    =
    (\widehat{\otimes}^N \Phi)_a^{\prime}\cup (\widehat{\otimes}^N \Phi)_a^{\times},
    \label{eqn:braket-symm-setB}
\end{eqnarray}
respectively.

\section{Observable}
\label{sec:3}

\subsection{Spectral expansion in the tensor product of RHS}
\label{sec:3.2}

Now we consider on an self-adjoint operator with respect to the tensor product of RHS (\ref{eqn:O2-4}) and its spectral decomposition based on RHS approach.
We set $N=2$ for simplicity.
Let $A_i : D(A_i)\rightarrow \mathcal{H}_i$ be self-adjoint in $\mathcal{H}_i$ where $D(A_i)$ indicates the domain of $A_i$ $(i=1,2)$.
Each $A_i$ is assumed to be continuous on $\Phi_i$, satisfying $A_i(\Phi_i)\subset \Phi_i$.
Now, we focus on a self-adjoint operator defined in the tensor product $\mathcal{H}_1 \overline{\otimes}\mathcal{H}_2$,
\begin{eqnarray}
    A=\overline{A_1\otimes I_2 +I_1\otimes A_2} : D({A}) \to \mathcal{H}_1 \overline{\otimes}\mathcal{H}_2, 
    \label{composed_operator}
\end{eqnarray}
which is given by the self-adjoint extension of the operator $A_1\otimes I_2 +I_1\otimes A_2$ in $\mathcal{H}_1 \overline{\otimes}\mathcal{H}_2$ where $I_i$ is the identity map for $\mathcal{H}_i$ $(i=1,2)$.
Notably, this form of $A$ is generally utilized as the Hamiltonian of a composite system~\cite{Simon1980,Messiah}.
This operator has the spectrum $Sp(A)=Cl(Sp(A_1)+Sp(A_2))$ lying on the real line
($ClX$ is the closure of a set $X$ in the real line).
Also, it is known that $A$ is continuous on the nuclear space $\Phi_1 \widehat{\otimes} \Phi_2$ such that the relation ${A}(\Phi_1 \widehat{\otimes} \Phi_2)\subset \Phi_1 \widehat{\otimes} \Phi_2$ holds~\cite{Maurin1968}.
Therefore, by the nuclear spectral theorem for the self-adjoint operator $A$ of the form (\ref{composed_operator}), the following relations are obtained~\cite{Maurin1968} :
for any $\varphi, \psi \in \Phi_1 \widehat{\otimes} \Phi_2$,

\begin{eqnarray}
    \langle \varphi,\, \psi\rangle_{\mathcal{H}_1\overline{\otimes} \mathcal{H}_2}
    & = & 
     \displaystyle
     \int_{\lambda\in Sp({A})}
     \braket{\hat{\varphi}}{\hat{\psi}}_\lambda
     d\mu_\lambda,
    \label{eqn:o3-2-1a}\\
    \langle \varphi,\, A\psi\rangle_{\mathcal{H}_1\overline{\otimes} \mathcal{H}_2}
    & = & 
     \displaystyle
     \int_{\lambda\in Sp({A})}
     \lambda\braket{\hat{\varphi}}{\hat{\psi}}_\lambda
     d\mu_\lambda,
    \label{eqn:o3-2-1b}
\end{eqnarray}
with
\begin{eqnarray}
    \displaystyle\braket{\hat{\varphi}}{\hat{\psi}}_\lambda
    =
    \displaystyle\int_{\lambda=\lambda_1+\lambda_2}
     \sum_{k=1}^{dim \hat{\mathcal{H}_1}(\lambda_1)} 
     \sum_{l=1}^{dim \hat{\mathcal{H}_2}(\lambda_2)}
     (e_{\lambda_1,k}^1\otimes
     e_{\lambda_2,k}^2)^*(\varphi)
     (e_{\lambda_1,k}^1\otimes
     e_{\lambda_2,l}^2)(\psi)
     d\sigma^\lambda_{\lambda_1,\lambda_2},
    \label{eqn:o3-2-1c}
\end{eqnarray}
where $\mu_\lambda$ is the Borel measure, 
$\sigma^\lambda_{\lambda_1,\lambda_2}$ is also a Borel measure on $\mathbb{R}^2$ whose support is contained in the set $\{(\lambda_1,\lambda_2)\in \mathbb{R}^2 ; \lambda=\lambda_1+\lambda_2, \lambda_i\in Sp(A_i) (i=1,2)\}$.
$e_{\lambda_1,k}^1(k=1,2,\cdots, dim \hat{\mathcal{H}_1}(\lambda_1))$ and $e_{\lambda_2,l}^2~(l=1,2,\cdots, dim \hat{\mathcal{H}_2}(\lambda_2))$ are the generalized eigenvectors of $A_1$ and $A_2$ corresponding to $\lambda_1$ and $\lambda_2$ respectively. 
When $dim \hat{\mathcal{H}_1}(\lambda_1))=dim \hat{\mathcal{H}_2}(\lambda_2)=1$,
the relations ($\ref{eqn:o3-2-1a}$) and ($\ref{eqn:o3-2-1b}$) are expressed as
\begin{eqnarray}
    \langle \varphi,\, \psi\rangle_{\mathcal{H}_1\overline{\otimes} \mathcal{H}_2}
    & = & 
     \displaystyle
     \int_{\lambda\in Sp({A})}
    \Big{\{}
    \displaystyle\int_{\lambda=\lambda_1+\lambda_2}
     (e^1_{\lambda_1}\otimes
     e^2_{\lambda_2})^*(\varphi)
     (e^1_{\lambda_1}\otimes
     e^2_{\lambda_2})(\psi)
     d\sigma^\lambda_{\lambda_1,\lambda_2}
     \Big{\}}
     d\mu_\lambda,
    \label{eqn:o3-2-2a}
    \\
    \langle \varphi,\, A\psi\rangle_{\mathcal{H}_1\overline{\otimes} \mathcal{H}_2}
    & = & 
     \displaystyle
     \int_{\lambda\in Sp({A})}
    \lambda
    \Big{\{}
    \displaystyle\int_{\lambda=\lambda_1+\lambda_2}
     (e^1_{\lambda_1}\otimes
     e^2_{\lambda_2})^*(\varphi)
     (e^1_{\lambda_1}\otimes
     e^2_{\lambda_2})(\psi)
     d\sigma^\lambda_{\lambda_1,\lambda_2}
     \Big{\}}
     d\mu_\lambda.
    \label{eqn:o3-2-2b}
\end{eqnarray}
When the following notations are introduced,
\begin{eqnarray}
    e^i_{\lambda_i}\to \bra{\lambda_i}_{\mathcal{H}_i}, ~~
    (e^i_{\lambda_i})^*\to \ket{\lambda_i}_{\mathcal{H}_i}, ~~(i=1,2)
    \label{notations1}
\end{eqnarray}
and 
\begin{eqnarray}
     e^1_{\lambda_1}\otimes
     e^2_{\lambda_2}(\varphi)
     \to
    \bra{\lambda_1}_{\mathcal{H}_1}\otimes
     \bra{\lambda_2}_{\mathcal{H}_2} \ket{\varphi}_{\mathcal{H}_1\overline{\otimes}\mathcal{H}_2},~
     \nonumber\\
    (e^1_{\lambda_1}\otimes
     e^2_{\lambda_2})^*(\varphi)
     \to
    \bra{\varphi}_{\mathcal{H}_1\overline{\otimes}\mathcal{H}_2}
    \ket{\lambda_1}_{\mathcal{H}_1}\otimes
     \ket{\lambda_2}_{\mathcal{H}_2},
    \label{notations2}
\end{eqnarray}
%
%
(\ref{eqn:o3-2-2a}) and (\ref{eqn:o3-2-2b}) are represented as
\begin{align}
    \begin{split} 
    \langle \varphi,\, \psi\rangle_{\mathcal{H}_1\overline{\otimes} \mathcal{H}_2}
    & = 
    \displaystyle
     \int_{\lambda\in Sp({A})}
     \Big{\{}
    \displaystyle\int_{\lambda=\lambda_1+\lambda_2}
     \bra{\varphi}_{\mathcal{H}_1\overline{\otimes}\mathcal{H}_2}\ket{\lambda_1}_{\mathcal{H}_1}\otimes
     \ket{\lambda_2}_{\mathcal{H}_2}\\
     &\bra{\lambda_1}_{\mathcal{H}_1}\otimes
     \bra{\lambda_2}_{\mathcal{H}_2} \ket{\psi}_{\mathcal{H}_1\overline{\otimes}\mathcal{H}_2}
    d\sigma^\lambda_{\lambda_1,\lambda_2}
     \Big{\}}
     d\mu_\lambda,
     \end{split}
    \label{eqn:o3-2-3a}
\end{align}
and 
\begin{align}
    \begin{split} 
    \langle \varphi,\, A\psi\rangle_{\mathcal{H}_1\overline{\otimes} \mathcal{H}_2}
    & = 
    \displaystyle
     \int_{\lambda\in Sp({A})}
     \lambda
     \Big{\{}
    \displaystyle\int_{\lambda=\lambda_1+\lambda_2}
     \bra{\varphi}_{\mathcal{H}_1\overline{\otimes}\mathcal{H}_2}\ket{\lambda_1}_{\mathcal{H}_1}\otimes
     \ket{\lambda_2}_{\mathcal{H}_2}\\
     &\bra{\lambda_1}_{\mathcal{H}_1}\otimes
     \bra{\lambda_2}_{\mathcal{H}_2} \ket{\psi}_{\mathcal{H}_1\overline{\otimes}\mathcal{H}_2}
    d\sigma^\lambda_{\lambda_1,\lambda_2}
     \Big{\}}
     d\mu_\lambda,
     \end{split}
    \label{eqn:o3-2-3b}
\end{align}
for any $\varphi, \psi \in \Phi_1 \widehat{\otimes} \Phi_2$.
Note that $\bra{\lambda_1}_{\mathcal{H}_1}\otimes
     \bra{\lambda_2}_{\mathcal{H}_2}$ and $\ket{\lambda_1}_{\mathcal{H}_1}\otimes
     \ket{\lambda_2}_{\mathcal{H}_2}$ 
     belong to $(\Phi_1 \widehat{\otimes} \Phi_2)^\prime$ and 
     $(\Phi_1 \widehat{\otimes} \Phi_2)^{\times}$, 
     respectively.

The representations (\ref{eqn:o3-2-3a}) and (\ref{eqn:o3-2-3b}) indicate the expansions that are performed based on the tensor products of generalized eigenvectors
$\{\bra{\lambda_1}_{\mathcal{H}_1}\otimes
     \bra{\lambda_2}_{\mathcal{H}_2}\}$ and $\{\ket{\lambda_1}_{\mathcal{H}_1}\otimes
     \ket{\lambda_2}_{\mathcal{H}_2}\}$, as follows,
\begin{align}
    \begin{split} 
     \ket{\varphi}_{\mathcal{H}_1\overline{\otimes} \mathcal{H}_2}
    & = 
    \displaystyle\int_{\lambda\in Sp(A)}
    \displaystyle\int_{\lambda=\lambda_1+\lambda_2}
     \bra{\lambda_1}_{\mathcal{H}_1}\otimes
     \bra{\lambda_2}_{\mathcal{H}_2} \ket{\varphi}_{\mathcal{H}_1\overline{\otimes}\mathcal{H}_2}    \ket{\lambda_1}_{\mathcal{H}_1}\otimes
     \ket{\lambda_2}_{\mathcal{H}_2}
    d\sigma^\lambda_{\lambda_1,\lambda_2} d\mu_\lambda,
    \end{split}
    \label{spectralexpansion_ket_a}
    \\
    \begin{split}
    \ket{A\varphi}_{\mathcal{H}_1\overline{\otimes} \mathcal{H}_2}
    & = 
    \displaystyle\int_{\lambda\in Sp(A)}\lambda
    \displaystyle\int_{\lambda=\lambda_1+\lambda_2}
     \bra{\lambda_1}_{\mathcal{H}_1}\otimes
     \bra{\lambda_2}_{\mathcal{H}_2} \ket{\varphi}_{\mathcal{H}_1\overline{\otimes}\mathcal{H}_2}
     \ket{\lambda_1}_{\mathcal{H}_1}\otimes
     \ket{\lambda_2}_{\mathcal{H}_2}
    d\sigma^\lambda_{\lambda_1,\lambda_2} d\mu_\lambda,
    \end{split}
    \label{spectralexpansion_ket_b}
\end{align}
\begin{align}
    \begin{split} 
     \bra{\varphi}_{\mathcal{H}_1\overline{\otimes} \mathcal{H}_2}
    & = 
    \displaystyle\int_{\lambda\in Sp(A)}
    \displaystyle\int_{\lambda=\lambda_1+\lambda_2}
     \bra{\varphi}_{\mathcal{H}_1\overline{\otimes}\mathcal{H}_2}    \ket{\lambda_1}_{\mathcal{H}_1}\otimes
     \ket{\lambda_2}_{\mathcal{H}_2}
     \bra{\lambda_1}_{\mathcal{H}_1}\otimes
     \bra{\lambda_2}_{\mathcal{H}_2}
    d\sigma^\lambda_{\lambda_1,\lambda_2} d\mu_\lambda,
    \end{split}
    \label{spectralexpansion_bra_a}
    \\
    \begin{split}
    \bra{A\varphi}_{\mathcal{H}_1\overline{\otimes} \mathcal{H}_2}
    & = 
    \displaystyle\int_{\lambda\in Sp(A)}\lambda
    \displaystyle\int_{\lambda=\lambda_1+\lambda_2}
      \bra{\varphi}_{\mathcal{H}_1\overline{\otimes}\mathcal{H}_2}
     \ket{\lambda_1}_{\mathcal{H}_1}\otimes
     \ket{\lambda_2}_{\mathcal{H}_2}
     \bra{\lambda_1}_{\mathcal{H}_1}\otimes
     \bra{\lambda_2}_{\mathcal{H}_2}
    d\sigma^\lambda_{\lambda_1,\lambda_2} d\mu_\lambda.
    \end{split}
    \label{spectralexpansion_bra_b}
\end{align}
Hereafter, we adopt the sign $\int_{Sp(A)}d\nu$ in stead of $\int_{\lambda\in Sp(A)d}d\sigma^\lambda_{\lambda_1,\lambda_2}\int_{\lambda=\lambda_1+\lambda_2}d\mu_\lambda$. Consequently, the relations obtained till now can be represented simply by using the abbreviation,
\begin{eqnarray}
    \int_{\lambda\in Sp(A)}\int_{\lambda=\lambda_1+\lambda_2}\to \int_{Sp(A)}~~ 
    \text{and}~~
    d\sigma^\lambda_{\lambda_1,\lambda_2}d\mu_\lambda \to d\nu.
    \label{abbreviation}
\end{eqnarray}
Then, the spectral expansions of $\ket{\varphi}_{\mathcal{H}_1\overline{\otimes} \mathcal{H}_2}$ and $\bra{\varphi}_{\mathcal{H}_1\overline{\otimes} \mathcal{H}_2}$ of (\ref{spectralexpansion_ket_a})--(\ref{spectralexpansion_bra_b}) convert into
\begin{eqnarray}
     \ket{\varphi}_{\mathcal{H}_1\overline{\otimes} \mathcal{H}_2}
    & = 
    \displaystyle\int_{Sp(A)}
     \bra{\lambda_1}_{\mathcal{H}_1}\otimes
     \bra{\lambda_2}_{\mathcal{H}_2} \ket{\varphi}_{\mathcal{H}_1\overline{\otimes}\mathcal{H}_2}    \ket{\lambda_1}_{\mathcal{H}_1}\otimes
     \ket{\lambda_2}_{\mathcal{H}_2}
    d\nu,\\
    \label{spectralexpansion_ket_aa}
         \ket{A\varphi}_{\mathcal{H}_1\overline{\otimes} \mathcal{H}_2}
    & = 
    \displaystyle\int_{Sp(A)}\lambda
     \bra{\lambda_1}_{\mathcal{H}_1}\otimes
     \bra{\lambda_2}_{\mathcal{H}_2} \ket{\varphi}_{\mathcal{H}_1\overline{\otimes}\mathcal{H}_2}    \ket{\lambda_1}_{\mathcal{H}_1}\otimes
     \ket{\lambda_2}_{\mathcal{H}_2}
    d\nu,
    \label{spectralexpansion_ket_ab}
    \end{eqnarray}
and
    \begin{eqnarray}
    \bra{\varphi}_{\mathcal{H}_1\overline{\otimes} \mathcal{H}_2}
    & = 
    \displaystyle\int_{Sp(A)}
     \bra{\varphi}_{\mathcal{H}_1\overline{\otimes}\mathcal{H}_2}    \ket{\lambda_1}_{\mathcal{H}_1}\otimes
     \ket{\lambda_2}_{\mathcal{H}_2}
     \bra{\lambda_1}_{\mathcal{H}_1}\otimes
     \bra{\lambda_2}_{\mathcal{H}_2} 
    d\nu,\\
    \label{spectralexpansion_bra_aa}
    \bra{A\varphi}_{\mathcal{H}_1\overline{\otimes} \mathcal{H}_2}
    & = 
    \displaystyle\int_{Sp(A)}\lambda
     \bra{\varphi}_{\mathcal{H}_1\overline{\otimes}\mathcal{H}_2}    \ket{\lambda_1}_{\mathcal{H}_1}\otimes
     \ket{\lambda_2}_{\mathcal{H}_2}
     \bra{\lambda_1}_{\mathcal{H}_1}\otimes
     \bra{\lambda_2}_{\mathcal{H}_2} 
    d\nu.
    \label{spectralexpansion_bra_ab}
\end{eqnarray}

When $\varphi=\varphi_1\otimes \varphi_2\in \Phi_1 \otimes \Phi_2$, the relation,
$\bra{\lambda_1}_\mathcal{H}\otimes\bra{\lambda_2}_\mathcal{H}(\varphi_1\otimes\varphi_2)
=\bra{\lambda_1}_\mathcal{H}(\varphi_1)\otimes\bra{\lambda_2}_\mathcal{H}(\varphi_2)=
\braket{\lambda_1}{\varphi_1}_{\mathcal{H}_1}\braket{\lambda_2}{\varphi_2}_{\mathcal{H}_2}$, can be utilized to obtain the spectral expansions of
$\ket{\varphi_1}_{\mathcal{H}_1}\otimes\ket{\varphi_2}_{\mathcal{H}_2}$ :
\begin{align}
\begin{split}
    \ket{\varphi_1}_{\mathcal{H}_1}\otimes\ket{\varphi_2}_{\mathcal{H}_2}
    & = \ket{\varphi_1\otimes\varphi_2}_{\mathcal{H}_1{\otimes} \mathcal{H}_2}
    =
    \ket{\varphi_1\otimes\varphi_2}_{\mathcal{H}_1\overline{\otimes} \mathcal{H}_2} \\
    & = 
    \displaystyle\int_{Sp(A)}
     \bra{\lambda_1}_{\mathcal{H}_1}\otimes
     \bra{\lambda_2}_{\mathcal{H}_2} 
    \ket{\varphi_1\otimes\varphi_2}_{\mathcal{H}_1\overline{\otimes} \mathcal{H}_2}     
     \ket{\lambda_1}_{\mathcal{H}_1}\otimes
     \ket{\lambda_2}_{\mathcal{H}_2}
    d\nu\\
    & =
    \displaystyle\int_{Sp(A)}
     \bra{\lambda_1}_{\mathcal{H}_1}\otimes
     \bra{\lambda_2}_{\mathcal{H}_2} 
    (\varphi_1\otimes\varphi_2)
     \ket{\lambda_1}_{\mathcal{H}_1}\otimes
     \ket{\lambda_2}_{\mathcal{H}_2}
    d\nu\\
    & =
    \displaystyle\int_{Sp(A)}
     \braket{\lambda_1}{\varphi_1}_{\mathcal{H}_1}
     \braket{\lambda_2}{\varphi_2}_{\mathcal{H}_2}
     \ket{\lambda_1}_{\mathcal{H}_1}\otimes
     \ket{\lambda_2}_{\mathcal{H}_2}
    d\nu.
    \end{split}
    \label{specialcase_spectralexpansion_ket_a}  
\end{align}
(\ref{specialcase_spectralexpansion_ket_a}) shows that the expansion coefficient of $\ket{\varphi_1}_{\mathcal{H}_1}\otimes\ket{\varphi_2}_{\mathcal{H}_2}$ by the set of the (generalized) eigenvectors $\{\ket{\lambda_1}_{\mathcal{H}_1}\otimes
     \ket{\lambda_2}_{\mathcal{H}_2}\}$ of $A$ is given as $\braket{\lambda_1}{\varphi_1}_{\mathcal{H}_1}
     \braket{\lambda_2}{\varphi_2}_{\mathcal{H}_2}$, where $\lambda=\lambda_1+\lambda_2$ goes through $Sp(A)$.
The expansion for the bra, $\bra{\varphi_1}_{\mathcal{H}_1}\otimes\bra{\varphi_2}_{\mathcal{H}_2}$, is also obtained as 
\begin{align}
\begin{split}
    \bra{\varphi_1}_{\mathcal{H}_1}\otimes\bra{\varphi_2}_{\mathcal{H}_2}
    & =
    \displaystyle\int_{Sp(A)}
     \braket{\varphi_1}{\lambda_1}_{\mathcal{H}_1}
     \braket{\varphi_2}{\lambda_2}_{\mathcal{H}_2}
     \bra{\lambda_1}_{\mathcal{H}_1}\otimes
     \bra{\lambda_2}_{\mathcal{H}_2}
    d\nu,
    \end{split}
    \label{specialcase_spectralexpansion_bra_a}  
\end{align}
whose expansion coefficient is $\braket{\varphi_1}{\lambda_1}_{\mathcal{H}_1}
     \braket{\varphi_2}{\lambda_2}_{\mathcal{H}_2}$.
%
%

%
%


\subsection{Eigenequations and Complete orthonormal system}
\label{sec:3.3}

The tensor products of generalized eigenvectors $\bra{\lambda_1}_{\mathcal{H}_1}\otimes
     \bra{\lambda_2}_{\mathcal{H}_2}$ and $\ket{\lambda_1}_{\mathcal{H}_1}\otimes
     \ket{\lambda_2}_{\mathcal{H}_2}$ for $A$ 
 satisfy the following eigenequations, respectively : for any $\varphi \in \Phi_1 \widehat{\otimes} \Phi_2$, 
\begin{eqnarray}
    \bra{\lambda_1}_{\mathcal{H}_1}\otimes
     \bra{\lambda_2}_{\mathcal{H}_2}(A\varphi)
     & = &
     (\lambda_1+\lambda_2)\bra{\lambda_1}_{\mathcal{H}_1}\otimes
     \bra{\lambda_2}_{\mathcal{H}_2}(\varphi),
     \label{eqn:eigenequation_tensors1}\\
     \ket{\lambda_1}_{\mathcal{H}_1}\otimes
     \ket{\lambda_2}_{\mathcal{H}_2}(A\varphi)
     & = &
     (\lambda_1+\lambda_2)\ket{\lambda_1}_{\mathcal{H}_1}\otimes
     \ket{\lambda_2}_{\mathcal{H}_2}(\varphi).
    \label{eqn:eigenequation_tensors2}
\end{eqnarray}
To see them, let $\varphi \in \Phi_1 \widehat{\otimes} \Phi_2$ 
and then $\varphi$ can be represented as the form of the sum of an absolutely convergence series, $\varphi =\sum_{i=1}^{\infty}r_i \varphi_i^1\otimes \varphi_i^2$,
where $\sum_{i}|r_i|\leq 1$ and $\{\varphi_i^1\}$ and $\{\varphi_i^2\}$ are null sequences in $\Phi_1$ and $\Phi_2$, respectively~\cite{Schaefer1966}.
As $A=\overline{A_1\otimes I_2+I_1\otimes A_2}$ is continuous linear on $\Phi_1 \widehat{\otimes} \Phi_2$, 
we obtain $A\varphi=\sum_{i=1}^\infty r_i (A_1\varphi_i^1\otimes\varphi_i^2+\varphi_i^1\otimes A_2\varphi_i^2)$.
Therefore, the continuous anti-linearity of  $\ket{\lambda_1}_{\mathcal{H}_1}\otimes
     \ket{\lambda_2}_{\mathcal{H}_2}$ on $\Phi_1 \widehat{\otimes} \Phi_2$ provides
\begin{eqnarray}
     \ket{\lambda_1}_{\mathcal{H}_1}\otimes
     \ket{\lambda_2}_{\mathcal{H}_2}(A\varphi)
     & = &
    \sum_{i=1}^\infty r^*_i \ket{\lambda_1}_{\mathcal{H}_1}\otimes
     \ket{\lambda_2}_{\mathcal{H}_2}
     (A_1\varphi_i^1\otimes\varphi_i^2+\varphi_i^1\otimes A_2\varphi_i^2)\nonumber\\
     & = &
     \sum_{i=1}^\infty r^*_i 
     \Big{(}\lambda_1\ket{\lambda_1}_{\mathcal{H}_1}\varphi_i^1
     \otimes
     \ket{\lambda_2}_{\mathcal{H}_2}\varphi_i^2
     +\lambda_2\ket{\lambda_1}_{\mathcal{H}_1}\varphi_i^1
     \otimes 
     \ket{\lambda_2}_{\mathcal{H}_2}\varphi_i^2\Big{)}\nonumber \\
     & = &
     (\lambda_1+\lambda_2)\ket{\lambda_1}_{\mathcal{H}_1}\otimes \ket{\lambda_2}_{\mathcal{H}_2}(\sum_{i=1}^{\infty}r_i \varphi_i^1\otimes \varphi_i^2)\nonumber \\
     & = &
     (\lambda_1+\lambda_2)\ket{\lambda_1}_{\mathcal{H}_1}\otimes \ket{\lambda_2}_{\mathcal{H}_2}(\varphi).
    \label{eqn:app.eigenequation_tensors}
\end{eqnarray}
Similarly, we have 
$\bra{\lambda_1}_{\mathcal{H}_1}\otimes
     \bra{\lambda_2}_{\mathcal{H}_2}(A\varphi)
     =     (\lambda_1+\lambda_2)\bra{\lambda_1}_{\mathcal{H}_1}\otimes \bra{\lambda_2}_{\mathcal{H}_2}(\varphi)$,
which shows the eigenequation (\ref{eqn:eigenequation_tensors2})

The complete orthonormal form is established using $\{\ket{\lambda_1}_{\mathcal{H}_1}\otimes
     \ket{\lambda_2}_{\mathcal{H}_2}\}$.
Actually, by using the expansions (\ref{spectralexpansion_ket_a}) or  (\ref{spectralexpansion_bra_a}), 
the completion form is given as 
\BA
    I & = &  
    \int_{\lambda\in Sp(A)}
    \int_{\lambda=\lambda_1+\lambda_2}
    \ket{\lambda_1}_{\mathcal{H}_1}\otimes
     \ket{\lambda_2}_{\mathcal{H}_2}\bra{\lambda_1}_{\mathcal{H}_1}\otimes
     \bra{\lambda_2}_{\mathcal{H}_2}
     d\sigma^\lambda_{\lambda_1,\lambda_2}d\mu_\lambda
    \nonumber\\
    & = & \int_{Sp(A)} 
    \ket{\lambda_1}_{\mathcal{H}_1}\otimes
     \ket{\lambda_2}_{\mathcal{H}_2}\bra{\lambda_1}_{\mathcal{H}_1}\otimes
     \bra{\lambda_2}_{\mathcal{H}_2}
     d\nu.
    \label{eqn:O3-3-4}
\EA
Here, the notation (\ref{abbreviation}) is adapted.
To consider the orthonormality, putting $\varphi(\lambda_1,\lambda_2)
\equiv
\bra{\lambda_1}_{\mathcal{H}_1}\otimes     \bra{\lambda_2}_{\mathcal{H}_2}(\varphi)=\bra{\lambda_1}_{\mathcal{H}_1}\otimes     \bra{\lambda_2}_{\mathcal{H}_2}\ket{\varphi}_{\mathcal{H}_1\overline{\otimes} \mathcal{H}_2}$ for $\varphi \in \Phi_1 \widehat{\otimes} \Phi_2$,
we have 
\BA
    &&
    \int_{\lambda\in Sp(A)}
    \int_{\lambda=\lambda_1+\lambda_2}
    \bra{\lambda^{\prime}_1}_{\mathcal{H}_1}\otimes     \bra{\lambda^{\prime}_2}_{\mathcal{H}_2}
    \ket{\lambda_1}_{\mathcal{H}_1}\otimes     \ket{\lambda_2}_{\mathcal{H}_2}
    \varphi(\lambda_1,\lambda_2) d\sigma^{\lambda}_{\lambda_1,\lambda_2}d\mu_{\lambda} \nonumber\\
    & = &
    \int_{\lambda\in Sp(A)}
    \int_{\lambda=\lambda_1+\lambda_2} 
    \bra{\lambda^{\prime}_1}_{\mathcal{H}_1}\otimes     \bra{\lambda^{\prime}_2}_{\mathcal{H}_2}
    \ket{\lambda_1}_{\mathcal{H}_1}\otimes     \ket{\lambda_2}_{\mathcal{H}_2}
    \bra{\lambda_1}_{\mathcal{H}_1}\otimes     \bra{\lambda_2}_{\mathcal{H}_2}
    \ket{\varphi}_{\mathcal{H}_1\overline{\otimes} \mathcal{H}_2}
    d\sigma^{\lambda}_{\lambda_1,\lambda_2}d\mu_{\lambda} \nonumber\\
    & = &
    \bra{\lambda^{\prime}_1}_{\mathcal{H}_1}\otimes     \bra{\lambda^{\prime}_2}_{\mathcal{H}_2}
        \ket{\varphi}_{\mathcal{H}_1\overline{\otimes} \mathcal{H}_2}
        \nonumber\\
    & = &
    \varphi(\lambda_1^\prime,\lambda_2^\prime).
    \label{eqn:O3-3-5}
\EA
(\ref{eqn:O3-3-5}) implies that the combination,
$\bra{\lambda^{\prime}_1}_{\mathcal{H}_1}\otimes     \bra{\lambda^{\prime}_2}_{\mathcal{H}_2}
    \ket{\lambda_1}_{\mathcal{H}_1}\otimes     \ket{\lambda_2}_{\mathcal{H}_2}$, of $\bra{\lambda^{\prime}_1}_{\mathcal{H}_1}\otimes     \bra{\lambda^{\prime}_2}_{\mathcal{H}_2}$
    and 
    $
    \ket{\lambda_1}_{\mathcal{H}_1}\otimes     \ket{\lambda_2}_{\mathcal{H}_2}$
can be represented by the product of $\delta$-functions,
\begin{eqnarray}
\bra{\lambda^{\prime}_1}_{\mathcal{H}_1}\otimes     \bra{\lambda^{\prime}_2}_{\mathcal{H}_2}
    \ket{\lambda_1}_{\mathcal{H}_1}\otimes     \ket{\lambda_2}_{\mathcal{H}_2}
    =
    \Check{\delta}(\lambda_1^\prime-\lambda_1)
    \Check{\delta}(\lambda_2^\prime-\lambda_2),
    \label{eqn:O3-3-6}
\end{eqnarray}
where $\Check{\delta}$ is performed as
\begin{eqnarray}
    f(\lambda_1^\prime,\lambda_2^\prime)
    & = & 
    \int_{\lambda\in Sp(A)}
    \int_{\lambda=\lambda_1+\lambda_2}
    f(\lambda_1,\lambda_2)\Check{\delta}(\lambda_1^\prime-\lambda_1)
    \Check{\delta}(\lambda_2^\prime-\lambda_2)
    d\sigma^{\lambda}_{\lambda_1,\lambda_2}d\mu_{\lambda}\nonumber \\
   & = &
    \int_{Sp(A)}
    f(\lambda_1,\lambda_2)\Check{\delta}(\lambda_1^\prime-\lambda_1)
    \Check{\delta}(\lambda_2^\prime-\lambda_2)
    d\nu
    \nonumber
    \label{deltafunction}
\end{eqnarray}
for any function $f(\lambda_1,\lambda_2)$.
Thus, the complete orthonormal form given by
$\{\ket{\lambda_1}_{\mathcal{H}_1}\otimes     \ket{\lambda_2}_{\mathcal{H}_2}\}$ is constructed as the relations (\ref{eqn:O3-3-4}) and (\ref{eqn:O3-3-6}).
%


\subsection{Extension to the dual spaces}
\label{sec:3.4}

The self-adjoint operator $
A=\overline{A_1\otimes I_2 +I_1\otimes A_2} 
$ can be extended to dual spaces as follows.
As $A$ is continuous on $(\Phi_1 \hat{\otimes} {\Phi}_2,\widehat{\tau_p})$ with $A(\Phi_1 \hat{\otimes} {\Phi}_2) \subset \Phi_1 \hat{\otimes} {\Phi}_2$, an operator
\begin{eqnarray}
    \hat{A} : (\Phi_1 \widehat{\otimes} \Phi_2)^\prime \cup (\Phi_1 \widehat{\otimes} \Phi_2)^\times
    \rightarrow
    (\Phi_1 \widehat{\otimes} \Phi_2)^\prime \cup (\Phi_1 \widehat{\otimes} \Phi_2)^\times
    \label{extension_of_A}
\end{eqnarray}
can be defined as
\begin{eqnarray}
   (\hat {A} (f))(\varphi):=f(A(\varphi)),
    \label{eqn:O3-4-1}
\end{eqnarray}
for any $\varphi\in \Phi_1 \widehat{\otimes} \Phi_2$ and $f\in(\Phi_1 \widehat{\otimes} \Phi_2)^\prime \cup (\Phi_1 \widehat{\otimes} \Phi_2)^\times$.
It follows from (\ref{eqn:eigenequation_tensors1}) and (\ref{eqn:eigenequation_tensors2}) that $\hat {A}$ satisfies the eigenequations 
 with respect to $\{\bra{\lambda_1}_{\mathcal{H}_1}\otimes
     \bra{\lambda_2}_{\mathcal{H}_2}\}$ and $\{\ket{\lambda_1}_{\mathcal{H}_1}\otimes
     \ket{\lambda_2}_{\mathcal{H}_2}\}$,
\begin{eqnarray}
    \bra{\lambda_1}_{\mathcal{H}_1}\otimes
     \bra{\lambda_2}_{\mathcal{H}_2}\hat{A}
     & = &
     (\lambda_1+\lambda_2)\bra{\lambda_1}_{\mathcal{H}_1}\otimes
     \bra{\lambda_2}_{\mathcal{H}_2},
     \label{eqn:O3-4-3a}\\
     \hat{A}\ket{\lambda_1}_{\mathcal{H}_1}\otimes
     \ket{\lambda_2}_{\mathcal{H}_2}
     & = &
     (\lambda_1+\lambda_2)\ket{\lambda_1}_{\mathcal{H}_1}\otimes
     \ket{\lambda_2}_{\mathcal{H}_2}.
    \label{eqn:O3-4-3b}
\end{eqnarray}
%
%

%
In each RHS, $\Phi_i \subset \mathcal{H}_i \subset \Phi _i^\prime, \Phi _i^\times$, the self-adjoint operator $A_i : D(A_i)\rightarrow \mathcal{H}_i$ is assumed to be continuous on $\Phi_i$ and $A_i(\Phi_i)\subset \Phi_i$ ($i=1,2$).
Therefore, there corresponds the extension $\hat{A_i}$ on $\Phi_i^\prime\cup\Phi_i^\times$ to each $A_i$ such that
$(\hat {A_i} (f))(\varphi):=f(A_i(\varphi))$
for any $\varphi\in \Phi_i$ and $f\in\Phi_i ^\prime \cup \Phi_i^\times$.

Between the extensions $\hat{A}_i$ $(i=1,2)$ and $\hat{A}$ defined in (\ref{eqn:O3-4-1}), 
the relation 
\begin{eqnarray}
 \hat{A}=\hat{A_1}\otimes\hat{I_2}+\hat{I_1}\otimes\hat{A_2}
    \label{eqn:extension_relation}
\end{eqnarray}
holds on the subset $(\Phi_1^\prime\otimes\Phi_2^\prime)\cup(\Phi_1^\times\otimes\Phi_2^\times)$ of $(\Phi_1 \widehat{\otimes} \Phi_2)^\prime \cup (\Phi_1 \widehat{\otimes} \Phi_2)^\times$, 
where $\hat{I_i}$ is the identity on $\Phi_i^\prime\cup\Phi_i^\times$.
(For the proof, see Appendix \ref{sec:app.extension}).
This relation shows the connection of the self-adjoint observable $\hat{A_i} (i=1,2)$ of the isolated systems with $\hat{A}$ of their composite system in the dual spaces.   

Notably, the relations obtained so far can be easily generalized to the $N$-tensor product of RHS (\ref{eqn:O2-2-1}) using the self-adjoint operator $A=\overline{\sum_{i=1}^N\Check{A_i}}$ where $\Check{A_i}=I\otimes I\otimes \dots \otimes I \otimes A_i \otimes I \otimes\dots \otimes I$.


\section{Symmetry of the bra-ket in the dual spaces}
\label{sec:4}

In Sec.\ref{sec:2.2}, we have constructed the extension $\widetilde{P_c^{\widehat{\otimes}^N {\Phi}}}$ of the projection related to the permutation operator to the dual spaces, defined by (\ref{eqn:o3-2-1}),  such that the symmetry of the bra-ket vectors defined in the dual spaces is provided by $\widetilde{P_c^{\widehat{\otimes}^N {\Phi}}}$.
This section focuses on the relationship between $\widetilde{P_c^{\widehat{\otimes}^N {\Phi}}}$ and the generalized eigenvectors obtained via the nuclear spectral theorem for $A$ of the form (\ref{composed_operator}). 
It also ensures the symmetry of the bra-ket notation in the dual spaces based on the RHS formulation.

Now we assume $\mathcal{H}_1=\mathcal{H}_2=\dots=\mathcal{H}_N\equiv \mathcal{H}$ 
and
$\Phi_1=\Phi_2=\dots=\Phi_N \equiv \Phi$.
We set the RHS comprising the $N$-tensor product of RHS (\ref{eqn:O3-1})
and consider a self-adjoint operator in the RHS,  
\begin{eqnarray}
    A=\overline{\sum_{i=1}^N\Check{A_i}} \text{~~~where~~~} \Check{A_i}=I\otimes I\otimes \dots \otimes I \otimes A_i \otimes I \otimes\dots \otimes I.
    \label{operator_of_A_identity}
\end{eqnarray}
Its extension given by (\ref{eqn:O3-4-1}) to $(\widehat{\otimes}^N \Phi)^{\prime}\cup
(\widehat{\otimes}^N \Phi)^{\times}$ is denoted by $\hat{A}$.
Subsequently, $A$ has the complete orthonormal system composed of the eigenvectors $\{\ket{\lambda_1}_{\mathcal{H}}\otimes \dots\otimes \ket{\lambda_N}_{\mathcal{H}}\}$. 
Here,  $\bra{\lambda_1}_{\mathcal{H}}\otimes \dots\otimes \bra{\lambda_N}_{\mathcal{H}}$ and $\ket{\lambda_1}_{\mathcal{H}}\otimes \dots\otimes \ket{\lambda_N}_{\mathcal{H}}$ belong to the dual spaces $(\widehat{\otimes}^N \Phi)^{\prime}$ and $
(\widehat{\otimes}^N \Phi)^{\times}$ and satisfy the 
eigenequations (\ref{eqn:O3-4-3a}) and (\ref{eqn:O3-4-3b}), respectively.
The symmetry of $\ket{\varphi_1,\otimes\dots \otimes \varphi_N}_{\overline{\otimes}^N\mathcal{H}}=\ket{\varphi_1}_{\mathcal{H}}\otimes \dots\otimes \ket{\varphi_N}_{\mathcal{H}}$ is determined as the relation (\ref{eqn:o3-ket-symmetry}) using $\widetilde{P_c^{\widehat{\otimes}^2 {\Phi}}}$.
We now consider the transformation of $\ket{\lambda_1}_{\mathcal{H}}\otimes \dots\otimes \ket{\lambda_N}_{\mathcal{H}}$ by $\widetilde{P_c^{\widehat{\otimes}^2 {\Phi}}}$.
For any $\phi=\phi_1\otimes\dots \otimes\phi \in {\otimes}^N \Phi_N \subset \widehat{\otimes}^N \Phi$, 
we have 
\begin{eqnarray}
    \widetilde{P_c^{\widehat{\otimes}^2 {\Phi}}}(\ket{\lambda_1}_{\mathcal{H}}\otimes \dots\otimes \ket{\lambda_N}_{\mathcal{H}})
    (\phi)
    & = &
    \ket{\lambda_1}_{\mathcal{H}}\otimes \dots\otimes \ket{\lambda_N}_{\mathcal{H}}(P_{c_1}^{\widehat{\otimes}^2 {\Phi}}(\phi))
    \nonumber \\
    & = &
    \ket{\lambda_1}_{\mathcal{H}}\otimes \dots\otimes \ket{\lambda_N}_{\mathcal{H}}
    \Bigr\{ \displaystyle\frac{1}{N!}\sum_{\sigma \in \mathfrak{S}_n}c(\sigma)\ket{\phi_{\sigma(1)}}_{\mathcal{H}}
    \otimes \dots \otimes
    \ket{\phi_{\sigma(N)}}_{\mathcal{H}}
    \Bigr\}
     \nonumber\\
    & = &
    \displaystyle\frac{1}{N!}\sum_{\sigma \in \mathfrak{S}_n}c(\sigma)\ket{\lambda_1}_{\mathcal{H}}\otimes \dots\otimes \ket{\lambda_N}_{\mathcal{H}}
    \ket{\phi_{\sigma(1)}}_{\mathcal{H}}
    \otimes \dots \otimes
    \ket{\phi_{\sigma(N)}}_{\mathcal{H}}
     \nonumber\\
    & = &
    \displaystyle\frac{1}{N!}\sum_{\sigma \in \mathfrak{S}_n}c(\sigma)
    \braket{\phi_{\sigma(1)}}{\lambda_1}
    \dots
    \braket{\phi_{\sigma(N)}}{\lambda_N}
    \nonumber\\
    & = &
    \displaystyle\frac{1}{N!}\sum_{\sigma \in \mathfrak{S}_n}c(\sigma)
    \braket{\phi_{1}}{\lambda_{\sigma(1)}}
    \dots
    \braket{\phi_{N}}{\lambda_{\sigma(N)}}
    \nonumber \\
    & = &
    \displaystyle\frac{1}{N!}\sum_{\sigma \in \mathfrak{S}_n}c(\sigma)\ket{\lambda_{\sigma(1)}}_{\mathcal{H}}
    \otimes \dots \otimes
    \ket{\lambda_{\sigma(N)}}_{\mathcal{H}}(\phi).
    \label{eqn:O3-5-calculation1}
\end{eqnarray}
From (\ref{eqn:O3-5-calculation1}), it is confirmed that 
\begin{eqnarray}
    \widetilde{P_c^{\widehat{\otimes}^2 {\Phi}}}\ket{\lambda_1}_{\mathcal{H}}\otimes \dots\otimes \ket{\lambda_N}_{\mathcal{H}}
    =
    \displaystyle\frac{1}{N!}\sum_{\sigma \in \mathfrak{S}_n}c(\sigma)\ket{\lambda_{\sigma(1)}}_{\mathcal{H}}
    \otimes \dots \otimes
    \ket{\lambda_{\sigma(N)}}_{\mathcal{H}}.  
    \label{eqn:O3-5-1}
\end{eqnarray}
(\ref{eqn:O3-5-1}) shows that the permutation operator $\widetilde{P_c^{\widehat{\otimes}^2 {\Phi}}}$ determines the symmetric structure of the eigenvectors $\ket{\lambda_1}_{\mathcal{H}}\otimes \dots\otimes \ket{\lambda_N}_{\mathcal{H}}$.
Similarly, 
we obtain  
\begin{eqnarray}
    \bra{\lambda_1}_{\mathcal{H}}\otimes \dots\otimes \bra{\lambda_N}_{\mathcal{H}}\widetilde{P_c^{\widehat{\otimes}^2 {\Phi}}}
    =
    \displaystyle\frac{1}{N!}\sum_{\sigma \in \mathfrak{S}_n}c(\sigma)\bra{\lambda_{\sigma(1)}}_{\mathcal{H}}
    \otimes \dots \otimes
    \bra{\lambda_{\sigma(N)}}_{\mathcal{H}}.  
    \label{eqn:O3-5-2}
\end{eqnarray}
Therefore, based on these relations (\ref{eqn:O3-5-1}) and (\ref{eqn:O3-5-2}), in addition to the results shown in Sec.\ref{sec:2.2}, one can conclude that the operator $\widetilde{P_c^{\widehat{\otimes}^2 {\Phi}}}$  completely provides the symmetry of the bra-ket vectors constructed in the RHS formulation.

In quantum mechanics, the commutative relation between an observable and the permutation operator is considered as the fundamental condition for proving that the symmetric and anti-symmetric states of identical particles become the eigenvectors of the observable~\cite{Messiah}.  
According to the proposed RHS framework, 
the operators (\ref{eqn:O3-5-1}) and (\ref{eqn:O3-5-2}) become the generalized eigenvectors of $A$ when $A$ and  $P_c^{\widehat{\otimes}^2 {\Phi}}$ are commutative on $\widehat{\otimes}^2 {\Phi}$, namely,  
\begin{eqnarray}
    \text{$[A, P_c^{\widehat{\otimes}^2 {\Phi}}] =0$ ~on $\widehat{\otimes}^2 {\Phi}$.}
    \label{eqn:O3-5-condition}
\end{eqnarray}
%
%
To verify this fact, the following lemma is applicable.
\begin{lemma}
\label{lemma}
{
Let $\Phi \subset \mathcal{H} \subset \Phi^\prime, \Phi^\times$ be an RHS and 
let $A:D(A)\to \mathcal{H}$ and $B:D(B)\to \mathcal{H}$ be self-adjoint operators in $\mathcal{H}$ such that they are continuous on $\Phi$ and the $A\Phi \subset \Phi$ and $B\Phi \subset \Phi$ are satisfied.
If $A$ and $B$ commute on $\Phi$, for each generalized eigenbra $\bra{\lambda}$ and eigenket $\ket{\lambda}$ of $A$ corresponding to the eigen value $\lambda$, 
the elements $\bra{\lambda}\hat{B}\in \Phi^\prime$ and $\hat{B}\ket{\lambda}\in \Phi^\times$ are the generalized eigen bra and ket, corresponding to $\lambda$.
Consequently, the relations 
\begin{eqnarray}
    \bra{\lambda}\hat{B} \hat{A}=\lambda\bra{\lambda}\hat{B},~~
    \hat{A}\hat{B}\ket{\lambda}=\lambda\hat{B}\ket{\lambda}
    \label{eqn:O3-5-Prop}
\end{eqnarray}
are satisfied, where $\hat{A}$ and $\hat{B}$ are the extensions on $\Phi^\prime \cup \Phi^\times$. 
}
\end{lemma}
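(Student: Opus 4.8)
The plan is to reduce the statement to a single algebraic identity between the dual extensions and then invoke the eigenequations that define $\bra{\lambda}$ and $\ket{\lambda}$. First I would unpack the definition $(\hat{A}f)(\varphi)=f(A\varphi)$, and likewise for $\hat{B}$, noting that continuity of $A$ and $B$ on $(\Phi,\tau_\Phi)$ together with $A\Phi\subset\Phi$ and $B\Phi\subset\Phi$ makes $\hat{A}$ and $\hat{B}$ well-defined linear maps that preserve each of the two dual spaces separately. In particular $\hat{B}\ket{\lambda}\in\Phi^\times$ and $\bra{\lambda}\hat{B}\in\Phi^\prime$, so the compositions $\hat{A}\hat{B}\ket{\lambda}$ and $\bra{\lambda}\hat{B}\hat{A}$ are genuine elements of the dual spaces, and the assertion that $\hat{B}\ket{\lambda}$ is a generalized eigenket for $\lambda$ is exactly the identity $\hat{A}\hat{B}\ket{\lambda}=\lambda\,\hat{B}\ket{\lambda}$ to be proven.

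The key observation is that dualization reverses composition: for any $f\in\Phi^\prime\cup\Phi^\times$ and $\varphi\in\Phi$ one has $(\hat{A}\hat{B}f)(\varphi)=(\hat{B}f)(A\varphi)=f(BA\varphi)$, so that $\hat{A}\hat{B}=\widehat{BA}$ and, symmetrically, $\hat{B}\hat{A}=\widehat{AB}$ on $\Phi^\prime\cup\Phi^\times$. Because $A$ and $B$ are endomorphisms of $\Phi$ (this is where $A\Phi\subset\Phi$, $B\Phi\subset\Phi$ enter), the hypothesis that they commute on $\Phi$ reads $AB=BA$ as maps $\Phi\to\Phi$, whence $\widehat{BA}=\widehat{AB}$ and therefore $\hat{A}\hat{B}=\hat{B}\hat{A}$ on the full dual spaces. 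Combining this with the defining eigenequations $\hat{A}\ket{\lambda}=\lambda\ket{\lambda}$ and $\bra{\lambda}\hat{A}=\lambda\bra{\lambda}$ and the linearity of $\hat{B}$ gives at once
\[
\hat{A}\hat{B}\ket{\lambda}=\hat{B}\hat{A}\ket{\lambda}=\hat{B}(\lambda\ket{\lambda})=\lambda\,\hat{B}\ket{\lambda},
\]
and the analogous computation $\bra{\lambda}\hat{B}\hat{A}=\lambda\,\bra{\lambda}\hat{B}$ for the bra, which are precisely the relations~(\ref{eqn:O3-5-Prop}). Equivalently, one may bypass the composition law and verify the identity pointwise, evaluating $(\hat{A}\hat{B}\ket{\lambda})(\varphi)=\ket{\lambda}(BA\varphi)=\ket{\lambda}(AB\varphi)=\lambda\,\ket{\lambda}(B\varphi)=\lambda\,(\hat{B}\ket{\lambda})(\varphi)$ for all $\varphi\in\Phi$.

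I do not expect a serious obstacle, since no completion or topological estimate is needed beyond the well-definedness of $\hat{A}$ and $\hat{B}$ on the duals; the entire content is the contravariant composition law and the correct order of operations. The one point that genuinely uses the invariance hypotheses is the step $\ket{\lambda}(AB\varphi)=\lambda\,\ket{\lambda}(B\varphi)$: here the generalized eigenequation for $\ket{\lambda}$ must be applied to the test vector $B\varphi$ rather than to $\varphi$, which is legitimate exactly because $B\Phi\subset\Phi$ guarantees $B\varphi\in\Phi$. Thus the main care is bookkeeping --- applying commutativity before the eigenequation, and invoking $B\Phi\subset\Phi$ precisely when the test vector is moved from $\varphi$ to $B\varphi$.
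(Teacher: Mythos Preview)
Your proposal is correct and follows essentially the same route as the paper: a pointwise computation $(\hat{A}\hat{B}\ket{\lambda})(\varphi)=\ket{\lambda}(BA\varphi)=\ket{\lambda}(AB\varphi)=\lambda\ket{\lambda}(B\varphi)=\lambda(\hat{B}\ket{\lambda})(\varphi)$, using commutativity on $\Phi$ and the invariance $B\Phi\subset\Phi$ to apply the eigenequation at $B\varphi$. The only cosmetic difference is that the paper threads the final scalar through the anti-linearity of $\ket{\lambda}$ explicitly, whereas you invoke the (equivalent) linearity of $\hat{B}$ on $\Phi^\times$; your formulation via $\hat{A}\hat{B}=\widehat{BA}=\widehat{AB}=\hat{B}\hat{A}$ is a clean way to package the same step.
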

\begin{proof}
    As $\ket{\lambda}$ is a generalized eigenvector of $A$ corresponding to $\lambda$, 
$\hat{A}\ket{\lambda}(\varphi)=\lambda\ket{\lambda}(\varphi)$ for any $\varphi\in \Phi$ is satisfied.
Let $\varphi\in \Phi$. 
Noting the anti-linearity of $\ket{\lambda}$ and $\hat{B}\ket{\lambda}$,
we have,
\begin{eqnarray}
    \hat{A}(\hat{B}\ket{\lambda})(\varphi)
    & = &
    \hat{B}\ket{\lambda}(A\varphi)
    =
    \ket{\lambda}(B(A\varphi))
    =
    \ket{\lambda}(A(B\varphi))\nonumber\\
    & = &
    \lambda\ket{\lambda}(B\varphi)=\ket{\lambda}(B\lambda^*\varphi)=\hat{B}(\ket{\lambda})(\lambda^*\varphi)=\lambda\hat{B}\ket{\lambda}(\varphi).
    \label{eqn:app.3.5}
\end{eqnarray}
Similarly, $\hat{A}(\hat{B}\bra{\lambda})(\varphi)=\lambda\bra{\lambda}\hat{B}(\varphi)$.
Thus, the desired assertion is complete.  

\end{proof}

Based on this lemma, 
it can be easily shown that when the condition (\ref{eqn:O3-5-condition}) is satisfied, (\ref{eqn:O3-5-1}) and (\ref{eqn:O3-5-2}) are the generalized eigenvectors of $A$ that satisfy
\begin{eqnarray}
    \hat{A}\widetilde{P_c^{\widehat{\otimes}^2 {\Phi}}}(\ket{\lambda_1}_{\mathcal{H}}\otimes \dots\otimes \ket{\lambda_N}_{\mathcal{H}})
    =
    (\lambda_1+ \dots+ \lambda_N)
    \widetilde{P_c^{\widehat{\otimes}^2 {\Phi}}}(\ket{\lambda_1}_{\mathcal{H}}\otimes \dots\otimes \ket{\lambda_N}_{\mathcal{H}}),
    \label{eqn:O3-5-3a}
\end{eqnarray}
and 
\begin{eqnarray}
    \bra{\lambda_1}_{\mathcal{H}}\otimes \dots\otimes \bra{\lambda_N}_{\mathcal{H}}\widetilde{P_c^{\widehat{\otimes}^2 {\Phi}}}\hat{A}
    =
    (\lambda_1+ \dots+ \lambda_N)
    \bra{\lambda_1}_{\mathcal{H}}\otimes \dots\otimes \bra{\lambda_N}_{\mathcal{H}}\widetilde{P_c^{\widehat{\otimes}^2 {\Phi}}}.
    \label{eqn:O3-5-3b}
\end{eqnarray}




%
%
%
\color{black}



\section{Conclusion}
\label{sec:5}

This study discussed the mathematical treatment of Dirac’s bra-ket formalism for composite systems in addition to identical particle systems using the RHS approach.
The tensor product of an RHS facilitates the precise construction of bra and ket vectors in the dual spaces. 
For identical particles systems, 
the symmetric structure of the bra and ket vectors can be introduced by extending the permutation operator to the dual spaces.
The spectral expansions of bra and ket vectors for a self-adjoint operator corresponding to an observable in composite systems via its generalized eigenvectors was established.
%
These generalized eigenvectors were associated with the eigenvectors of self-adjoint operators for single particles
and established the complete orthonormal system.
Furthermore, we investigate a relationship between the generalized eigenvectors for the self-adjoint operator and the extended permutation operator for preserving the symmetric structure of the bra and ket vectors in the dual spaces.
In future work, the present RHS formulation will be applied to areas such as quantum statistical mechanics and quantum field theory, aiming to enable more precise discussions of established studies~\cite{Antoiou1998, Antoiou2003, Liu2013}.

\appendix

\numberwithin{equation}{section}
\makeatletter

\section{The relation (\ref{eqn:extension_relation})}
\label{sec:app.extension}

Let $f\in \Phi_1^\prime \otimes \Phi_2^\prime$.
There are sequences $\{f_i^1\}$ and $\{f_i^2\}$ in $\Phi_1^\prime$ and $\Phi_2^\prime$ such that $f=\sum_{i=1}^n f_i^1\otimes f_i^2$.
By $(\hat{A_1}\otimes\hat{I_2}+\hat{I_1}\otimes \hat{A_2})(f)=\sum_{i=1}^n(\hat{A_1}f_i^1\otimes f_i^2)+\sum_{i=1}^n(f_i^1\otimes \hat{A_2}f_i^2)$, for $\phi=\sum_{j=1}^m\phi_j^1\otimes \phi_j^2\in \Phi_1\otimes \Phi_2$, we have
\begin{eqnarray}
    (\hat{A_1}\otimes\hat{I_2}+\hat{I_1}\otimes \hat{A_2})(f)(\phi) & = & \Big(\sum_{i=1}^n(\hat{A_1}f_i^1\otimes f_i^2)+\sum_{i=1}^n(f_i^1\otimes \hat{A_2}f_i^2)\Big)(\phi)
    \nonumber \\
    & = &
    \sum_{i=1}^n\sum_{j=1}^m\Big\{(\hat{A_1}f_i^1)(\phi_j^1)\otimes f_i^2(\phi_j^2)+f_i^1(\phi_j^1)\otimes(\hat{A_2}f_i^2)(\phi_j^2)\Big\}
    \nonumber \\
    & = &
    \sum_{i=1}^n\sum_{j=1}^m\Big\{f_i^1(A_1\phi_j^1)\otimes f_i^2(\phi_j^2)+f_i^1(\phi_j^1)\otimes f_i^2(A_2\phi_j^2)\Big\}.
    \label{eqn:appendix_1}
\end{eqnarray}
On the other hand,
since $\Phi_1^\prime \otimes \Phi_2^\prime$ is a subspace of $(\Phi_1\otimes\Phi_2)^\prime$ and $\hat{A}$ is linear on $(\Phi_1\otimes\Phi_2)^\prime$, 
\begin{eqnarray}
    \hat{A}(f)(\phi) & = & \sum_{i=1}^n \hat{A}(f_i^1\otimes f_i^2)(\phi)
    =\sum_{i=1}^n(f_i^1\otimes f_i^2)(A\phi)
    \nonumber \\
    & = &
    \sum_{i=1}^nf_i^1\otimes f_i^2\Big\{ \sum_{j=1}^m A_1\phi_j^1\otimes \phi_j^2+\phi_j^1 \otimes A_2\phi_j^2 \Big\}
    \nonumber \\
    & = &
    \sum_{i=1}^n\sum_{j=1}^m\Big\{f_i^1(A_1\phi_j^1)\otimes f_i^2(\phi_j^2)+f_i^1(\phi_j^1)\otimes f_i^2(A_2\phi_j^2)\Big\}.
    \label{eqn:appendix_2}
\end{eqnarray}
Thus, for any $f\in \Phi_1^\prime\otimes \Phi_2^\prime$, $\hat{A}(f)=(\hat{A_1}\otimes \hat{I_2}+\hat{I_1}\otimes \hat{A_2})(f)$.
Similarly, we can show $\hat{A}(f)=(\hat{A_1}\otimes \hat{I_2}+\hat{I_1}\otimes \hat{A_2})(f)$ for any $f\in \Phi_1^\times \otimes \Phi_2^\times$.
Thus, the relation (\ref{eqn:extension_relation}) holds on $(\Phi_1^\prime \otimes \Phi_2^\prime)\cup(\Phi_1^\times \otimes \Phi_2^\times)$.
%

\bigskip

\noindent
{\bf Acknowledgement}

The authors are grateful to
Prof. Y.~Yamazaki, Prof. T.~Yamamoto, Prof. Y.~Yamanaka, Prof. K.~Iida,
Prof. H.~Ujino, Prof. I.~Sasaki, 
Prof. H.~Saigo, Prof. F.~Hiroshima, 
Prof. S. Matsutani,
and Emeritus A.~Kitada for their useful comments and encouragement.
This work was supported by the Sasakawa Scientific Research Grant from The Japan Science Society and JSPS KAKENHI Grant Number 22K13976.



\bigskip


\end{document}